\newcommand{\norm}[1]{\left\lVert#1\right\rVert}
\newcommand{\pdev}[2]{\frac{\partial #1}{\partial #2}}
\newcommand{\minlamb}[1]{\lambda_{\tt {min}}(#1)}
\newcommand{\maxlamb}[1]{\lambda_{\tt{max}}(#1)}
\newcommand{\inv}[1]{#1^{-1}}
\newcommand{\tinv}[1]{#1^{-\top}}
\def\rea{\mathds{R}}
\def\cmp{\mathds{C}}
\newtheorem{remark}{Remark}
\newtheorem{theorem}{Theorem}
\newtheorem{assumption}{Assumption}
\newtheorem{corollary}{Corollary}
\newtheorem{proposition}{Proposition}
\newtheorem{definition}{Definition}
\def\BibTeX{{\rm B\kern-.05em{\sc i\kern-.025em b}\kern-.08em
		T\kern-.1667em\lower.7ex\hbox{E}\kern-.125emX}}
\begin{document}
	\title{Tuning of passivity-based controllers for mechanical systems}

	\author{Carmen Chan-Zheng, Pablo Borja, and Jacquelien M.A. Scherpen
		\thanks{The work of C. Chan-Zheng is sponsored by the University of Costa Rica. }
		\thanks{C. Chan-Zheng and J.M.A. Scherpen are with the Jan C. Willems Center for Systems and Control, and Engineering and Technology institute Groningen (ENTEG), Faculty of Science and Engineering at the University of Groningen, Groningen 9747 AG, The Netherlands (email: c.chan.zheng@rug.nl, j.m.a.scherpen@rug.nl).}
		\thanks{P. Borja is with the School of Engineering, Computing and Mathematics, University of Plymouth, Plymouth, Devon PL4 8AA, United Kingdom. (email: pablo.borjarosales@plymouth.ac.uk). }}
	\maketitle
	\thispagestyle{empty}
	%%%%%%%%%%%%%%%%%%%%%%%%%%%%%%%%%%%%%%%%%%%%%%%%%%%%%%%%%%%%%%%%%%%%%%%%%%%%%%%%
	\begin{abstract}
	This manuscript describes several approaches for tuning the parameters of a class of passivity-based controllers for standard nonlinear mechanical systems. Particularly, we are interested in tuning controllers that preserve the mechanical system structure in the closed loop. To this end, first, we provide tuning rules for stabilization, i.e.,  the rate of convergence (exponential stability) and stability margin (input-to-state stability). Then, we provide guidelines to remove the overshoot. Additionally, we propose a methodology to tune the gyroscopic-related parameters. We also provide remarks on the damping phenomenon to facilitate the practical implementation of our approaches. 	We conclude this paper with experimental results obtained from applying our tuning rules to a fully-actuated and an underactuated mechanical system. 
	\end{abstract}

	%%%%%%%%%%%%%%%%%%%%%%%%%%%%%%%%%%%%%%%%%%%%%%%%%%%%%%%%%%%%%%%%%%%%%%%%%%%%%%%%
	\section{Introduction}	
	The modeling and control of mechanical systems have been widely studied and reported in the literature due to their fundamental role in industries such as aerospace, automotive, biomedics, semiconductors, or manufacturing.

	For modeling,  we find the port-Hamiltonian (pH) framework among the existing approaches. This framework is an energy-based modeling technique that represents a large class of nonlinear physical systems from different domains \cite{duindam2009modeling,van2014port}. Moreover, it highlights the physical properties of the system under study. In particular, for the mechanical domain,  this modeling approach underscores the role of the interconnection structure, dissipation, potential, and kinetic energy play in the system behavior. Furthermore, the passivity property of the system can be verified by selecting the total energy of the system~--~i.e., the Hamiltonian~--~as the storage function. 
	
	On the other hand, amidst the existing control strategies to stabilize pH mechanical systems, we study the passivity-based control (PBC) methodologies, a set of well-established techniques that offer a constructive approach for stabilizing a large class of complex systems \cite{ortega2013passivity,vanderSchaft2017}. In general, these techniques consist of two main steps: i) the so-called \textit{energy shaping} (resp. \textit{power shaping}) process, and ii) the \textit{damping injection}. The former step modifies the total energy (resp. power) of the system to guarantee that the closed-loop system has a {stable} equilibrium at the desired point; additionally, the interconnection structure of the closed-loop system may be modified as a result of this step.  Then, the second step ensures asymptotic stability properties for the desired equilibrium point. Some results of PBC approaches for stabilizing mechanical systems can be found in \cite{gomez2004physical,romero2018global,acosta2005interconnection,viola2007total,hamada2020passivity,wesselink}.
	
	Customarily, the control parameters of PBC approaches are selected such that its closed-loop system exhibits a prescribed performance \textit{in terms of stability}. For instance, in \cite{vanderSchaft2017,borja2020,Donaire2012,romero2014globally} we find results on $\mathcal{L}_2$ stability, asymptotic stability, input-to-state (ISS) stability, and exponential stability (ES). However, sometimes, it is not sufficient to only prescribe a performance in terms of stability for several real applications \cite{wen2004experimental,bechlioulis2010prescribed}. For instance, it is essential to ensure a prescribed performance in terms of other indices (e.g., oscillations, rate of convergence, among others) to solve a task from applications involving mechanical systems that require high precision. 
	
	{In contrast to the linear counterpart~--~where we can find a substantial amount of results on tuning the linear (PID) controller \cite{aastrom2004revisiting}~--~the literature on tuning the gains for any nonlinear controller (including PBC schemes) is relatively scarce as the characterization of the nonlinear phenomena is in several cases an open problem. Moreover, despite the well-known theoretical advantages (for example, stability guarantees and improved performance despite the nonlinearities phenomena) of the nonlinear schemes compared with their linear counterpart, there is an evident gap between the practitioners and the theorists. This breach stems from the fact that implementing the nonlinear schemes is challenging and rarely seen in real-life applications as there are no guidelines to achieve a desirable performance. Moreover, an additional challenge in creating tuning methodologies in the nonlinear field is that there is no unified framework for characterizing the frequency domain for the nonlinear systems. Although, there are few tuning methods for nonlinear approaches~--~e.g., Neural Networks \cite{chen1990back,rodriguez2021self}~--~stability guarantees remain a challenge.
		
	Amid the PBC schemes, we find the interconnection and damping assignment (IDA)-PBC methodology, which is a universally stabilizing controller in the sense that it generates all asymptotically stabilizing controllers for systems that can be represented in the pH structure \cite{ortega2002interconnection}. To tune this scheme, the author in \cite{kotyczka2013local} proposes a methodology that consists of prescribing local dynamics to the closed-loop system via the eigenvalue assignment approach. However, the gain selection process from this methodology lacks physical intuition. Additional results for other PBC approaches can be found in \cite{ferguson2019kinetic,chan2021exponential,hamada2020passivity,jeltsema2004tuning,dirksz2013tuning,chan2020tuning}, where they demonstrate that the parameters can be associated directly with the physical quantities of the closed-loop system~--~e.g.,  damping or energy. In \cite{ferguson2019kinetic,chan2021exponential}, the authors explore the relationship of the parameters with the decay ratio of the system trajectories via a particular choice of Lyapunov candidates; while in \cite{hamada2020passivity,jeltsema2004tuning,dirksz2013tuning,chan2020tuning} we find results for tuning the gains to remove the oscillations exhibited during the transient response. Moreover, in \cite{wesselink}, we find results on tuning the gyroscopic-related forces, where the authors demonstrate an improved performance in terms of the oscillations; however, no theoretical background is provided. The inclusion of the gyroscopic forces is critical for stabilizing some underactuated mechanical applications (see \cite{blankenstein2002matching,chang2002equivalence,woolsey2004controlled,borja2022role}).}
	
	Nonetheless, to the best of the authors’ knowledge, there is no comprehensive set of tuning methodologies in the literature for PBC approaches on standard mechanical systems that prescribe the performance of the closed-loop system in terms of other indices rather than the stability. In this manuscript, we provide tuning methodologies for PBC approaches that in closed-loop with a standard mechanical system result in some particular target dynamics that preserve the mechanical structure. Our main contributions are listed as follows.
	\begin{itemize}
		\item [(i)] {We extend the results in \cite{chan2021exponential,chan2020tuning}, where we underscore that these are obtained for a particular PBC scheme, namely, the PID-PBC (see \cite{borja2020}). The current manuscript generalizes the results by considering a larger class of PBC schemes whose closed loops recover the port-Hamiltonian and the standard mechanical structures (we find the IDA-PBC among these schemes). Moreover, here we consider the effect of the gyroscopic forces~--~~that may be introduced in other PBC approaches~--~on the closed-loop performance, whose insight is absent in the previous work.  }
		\item [(ii)] We provide a novel guideline to select the upper bound of the maximum overshoot permissible for the output of the closed-loop system.
		\item [(iii)] We present novel results based on the ISS property, where we provide an expression in terms of the PBC parameters for the stability margin of the closed-loop system. 
		\item [(iv)] {We discuss the role of damping and energy on the performance (in terms of different metrics) of the system by relating these physical quantities to the PBC parameters}.
		\item [(v)] {We present a novel insight into the effect of the PBC gyroscopic-related parameters on the performance of the closed-loop system.}
\end{itemize}

	The remainder of this paper is structured as follows: in Section \ref{prel} we provide the theoretical backgrounds and formulate the problem under study. Section \ref{nonlinear} describes the tuning rules derived from the ES and ISS analysis. Section \ref{trtuning} provides the tuning rules to prescribe the behavior of the closed-loop system in the vicinity of the desired equilibrium. In Section \ref{treatment}, we provide some remarks on the practical implementation of the tuning rules. Then, we describe the experimental results obtained from two separate configurations: i) a 5 DoF robotic arm (a fully-actuated mechanical system), and ii) a two degrees-of-freedom (DoF) planar manipulator with flexible joints (an underactuated mechanical system),  in Section \ref{case}. We conclude this manuscript with some remarks and future work in Section \ref{conclusion}.
	
	\textbf{Notation}: We denote the $n\times n$ identity matrix as $I_n$ and the $n\times m$ matrix of zeros as $0_{n\times m}$. For a given smooth function $f:\rea^n\to \rea$, we define the differential operator $\nabla_x f:=\frac{\partial f}{\partial x}$ which is a column vector, and $\nabla^2_x f:=\frac{\partial^2 f}{\partial x^2}$. For a smooth mapping $F:\rea^n\to\rea^m$, we define the $ij-$element of its ${n\times m}$ Jacobian matrix as $(\nabla_x F)_{ij}:=\frac{\partial F_i}{\partial x_j}$. When clear from the context the subindex in $\nabla$ is omitted. For a given vector $x\in\rea^{n}$, we say that $A$ is \textit{positive definite (semi-definite)}, denoted as $A\succ0$  ($A\succeq0$), if $A=A^{\top}$ and $x^{\top}Ax>0$ ($x^{\top}Ax\geq0$) for all $x\in \rea^{n}-\{0_{n} \}$ ($\rea^{n}$). For a given vector $x\in\rea^n$ , we denote the Euclidean norm as $\norm{x}$ and the $\mathcal{L}_2$-norm as $\norm{x}_2$. For a given matrix $B\in\rea^{n\times m}$ , we denote its largest singular value as $\sigma_{\max}(B)$. For $B=B^\top$, we denote by $\maxlamb{B}$ (resp. $\minlamb{B}$) as the maximum (resp. minimum) eigenvalue of $B$. 
	Given a distinguished element $x_\star\in\rea^n$, we define the constant matrix $B_\star:=B(x_\star)\in\rea^{n\times m}$. We denote $\rea_+$ as the set of strictly positive real numbers and $\rea_{\geq0}$ as the set $\rea_+\cup \{0\}$. Let $x,y\in\rea^{n}$, we define $col(x,y):=[x^\top y^\top]^\top$. We denote $e_i$ as the $i^{th}$ element of the canonical basis of $\rea^n$. All the functions considered in this manuscript are assumed to be (at least) twice continuously differentiable. 
	
	\textbf{Caveat}: when possible, we omit the arguments to simplify the notation.

	%%%%%%%%%%%%%%%%%%%%%%%%%%%%%%%%%%%%%%%%%%%%%%%%%%%%%%%%%%%%%%%%%%%%%%%%%%%%%%%%%%%%%%%%%%%%%%%%%%%%	
	\section{Preliminaries and Problem formulation}\label{prel}
	In this section, we provide the pH representation of standard mechanical systems considered throughout this manuscript. Moreover, we present the target dynamics and provide a brief discussion on some PBC approaches that can achieve such dynamics. Then, we describe a particular pH structure, namely, \textit{the canonical Hamiltonian system}. Additionally, we discuss some stability properties and conclude this section with the problem formulation.
	
	\subsection{Description of standard mechanical systems}
	Consider a standard mechanical system in the pH framework
	\begin{equation}\label{sysmec}
		\arraycolsep=1pt \def\arraystretch{1.6}
		\begin{array}{rcl}
			\begin{bmatrix}
				\dot{{q}}\\\dot{{p}}
			\end{bmatrix} = \begin{bmatrix}
				0_{n\times n}&I_n\\
				-I_n &-{D}({q,p})\end{bmatrix}\begin{bmatrix}
				\nabla_q{H}({q},{p}) \\ \nabla_{p}{H}({q},{p})
			\end{bmatrix}&+&\begin{bmatrix}
				0_{n\times m} \\ G(q)
			\end{bmatrix}u
		\end{array}
	\end{equation} 
	\begin{equation*}
		\arraycolsep=1pt \def\arraystretch{1.6}
		\begin{array}{rcl}
			{H}({q},{p})&=&\displaystyle\frac{1}{2}{p}^\top \inv{M}({q}) {p}+{U}({q}),~y=G(q)^\top\inv{M}{{(q)p}}
		\end{array}
	\end{equation*}
	where ${q},{p} \in \rea^{n}$ are the generalized positions and momenta vectors, respectively; ${H}:\rea^n\times\rea^n\to \rea$ is the Hamiltonian of the system; the potential energy of the system is denoted with ${U}:\rea^n\to\rea$; ${M:\rea^n\to\rea^{n\times n}}$ corresponds to the mass inertia matrix, which is positive definite; ${{D}:\rea^n\times\rea^n\to \rea^{n\times n}}$ is positive semi-definite and represents the natural damping of the system; $u,y \in \rea^{m}$ are the control and passive output vectors, respectively; $m\leq n$; and $G(q):\rea^n\to \rea^{n\times m}$ is the input vector with $rank(G)=m$, which we define~--~to ease the presentation of the results~--~as 
	$$
	G:=\begin{bmatrix}
		0_{\ell\times m}\\I_m
	\end{bmatrix},
	$$
	with $\ell:=n-m$. 
	
	The set of assignable equilibria for \eqref{sysmec} is defined by
	\begin{equation*}
		\mathcal{E}:=\{{q},{p} \in \rea^n \ | \ {p}=0_n, \ G^\perp \nabla{U(q)}=0_{\ell}\},
	\end{equation*}
	where $G^{\perp} := \left[I_{\ell} \ 0_{\ell\times m}\right]$.
	
	Moreover, for all $q$, $M(q)$ is bounded, that is, \begin{equation}\label{boundM}
		\lambda_{\min}(M(q))I_n\leq{M(q)}\leq\lambda_{\max}(M(q))I_n.
	\end{equation}
	We refer the reader to \cite{ghorbel1993positive} for a complete characterization of robot manipulators with bounded inertia matrix.
	
	\subsection{The target dynamics}
	The stabilization of mechanical systems via PBC has been extensively studied, see for instance \cite{ortega2013passivity, gomez2004physical,romero2018global,acosta2005interconnection,viola2007total,hamada2020passivity,wesselink}.  Additionally, the energy shaping process is translated to find a Hamiltonian with an isolated minimum at $(q_\star,0_n) \in \mathcal{E}$, where $q_\star \in \rea^n$ is the desired configuration. Moreover, for some PBC approaches, shaping the kinetic energy results directly in modifying the interconnection structure (see for instance \cite{viola2007total,wesselink,zhang2017pid}). On the other hand, the damping injection process is performed by feeding back the passive output~--~customarily, it corresponds to the velocity~--~and ensures that the equilibrium is asymptotically stable. 

	Although the aforementioned references provide guidelines to guarantee stability, they lack tuning methodologies to ensure performance in terms of other indices. Therefore, in this manuscript, we focus on providing tuning rules for PBC methodologies that obtain the following target dynamics 
	\begin{equation}\label{tg1}
		\begin{bmatrix}
			\dot{{q}}\\\dot{{p}}
		\end{bmatrix}=\left({J}_{d}({q},{p})-R_d(q,p)\right)\nabla {H}_{d}({q},{p})
	\end{equation} 
	with
	\begin{equation}\label{tg2}
		\arraycolsep=1pt \def\arraystretch{1.6}
		\begin{array}{rcl}
			{J}_{d}({q},{p})&:=& \begin{bmatrix}
				0_{n\times n}&\inv{M}({q})M_d({q})\\ -M_d({q})\inv{M}({q})& {J_2(q,p)}
			\end{bmatrix}\\[0.5cm]	
			{R}_{d}({q},{p})&:=& \begin{bmatrix}
			0_{n\times n}&0_{n\times n}\\ 0_{n\times n}& {D}_{d}({q,p})
			\end{bmatrix}\\[0.5cm]	
			{H}_d{(q,p)}&:=& \displaystyle\frac{1}{2}{p}^\top \inv{M_d}({q}){p}+{U}_d{(q)},	
		\end{array}
	\end{equation} 
	where ${H}_d:\rea^n\times \rea^n \to\rea_+$ is the desired Hamiltonian; the desired inertia matrix $M_d:\rea^n\to \rea^{n\times n}$ is positive definite;  the desired potential energy ${U}_d:\rea^n\to\rea_+$ has a \textit{locally isolated minimum} at $q_{\star}$; the desired damping matrix ${{D}_{d}:\rea^n \times \rea^n \to \rea^{n\times n}}$ is positive semi-definite; and ${{J_2}:\rea^{n}\times \rea^{n} \to\rea^{n\times n}}$ is skew-symmetric.
	
	Then, we consider the following assumption throughout this manuscript to obtain the tuning guidelines.
	\\
		\begin{assumption}\label{ass1}
		Given \eqref{sysmec} and the desired equilibrium $(q_\star,0_n)\in\mathcal{E}$, there exists a control approach $u\in\rea^m$ such that the target dynamics takes the pH form \eqref{tg1}-\eqref{tg2}. Moreover, the desired Hamiltonian $H_d(q,p)$ has a local isolated minimum at $(q_\star,0_n)$, that is, the closed-loop system \eqref{tg1}-\eqref{tg2} is stable.
		\hfill$\square$
		\\
	\end{assumption}

	{In other words, we are interested in tuning PBC approaches such that the closed-loop system preserves the mechanical structure~--~in addition to the pH one~--~as in \eqref{tg1}-\eqref{tg2}. We emphasize that preserving the structure is crucial for developing our tuning rules, whose main benefit is endowing with physical intuition the process of gain selection. Some PBC methodologies encountered in the literature that verify Assumption \ref{ass1} are reported in \cite{gomez2004physical, wesselink,borja2020}.  The stabilization of mechanical systems via IDA-PBC is described in \cite{gomez2004physical}. While in \cite{wesselink,borja2020} report PID-PBC approaches that do not require the solution of partial differential equations.}

	\subsection{The canonical Hamiltonian system}\label{phcanonical}
	A change of coordinates is a well-known tool for converting a particular system into another structure that may provide better insight into a particular feature of the system under study (see \cite{chan2021exponential,venkatraman2010speed,fujimoto2003trajectory}). In the current section, we describe a particular transformation for \eqref{tg1}-\eqref{tg2} whose resulting structure~--~namely, \textit{the canonical Hamiltonian system}~--~may highlight the effects of the gyroscopic forces on the behavior of the closed-loop system. The transformed system is given by	
		\begin{equation}\label{syscanonical}
		\arraycolsep=1pt \def\arraystretch{1.6}
		\begin{array}{rl}
			\begin{bmatrix}
				\dot{{q}}\\\dot{{p}_c}
			\end{bmatrix} &= (J_c-R_c)\begin{bmatrix}
				\nabla_q{H_c}({q},{p}) \\ \nabla_{p_c}{H_c}({q},{p}_c)
			\end{bmatrix}, 
		\end{array}
	\end{equation} 
	with
	\begin{equation}\label{syscanonical2}
		\arraycolsep=5pt \def\arraystretch{1.6}
		\begin{array}{rl}
			&J_c:=\begin{bmatrix}
				0_{n\times n}&I_n\\
				-I_n &0_{n\times n}\end{bmatrix}, \quad R_c:=\begin{bmatrix}
				0_{n\times n}&0_{n\times n}\\
				0_{n\times n} &D_c(q,p_c)\end{bmatrix} 
		\end{array}
	\end{equation} 
	where $D_c:\rea^n \times \rea^n\to \rea^{n\times n}$ is positive semi-definite, $H_c:\rea^n\times \rea^n \to \rea_+$ is the \textit{canonical Hamiltonian}, and $(q,p_c)$ are the \textit{canonical coordinates} of \eqref{tg1}-\eqref{tg2}.
	
	The process to transform \eqref{tg1}-\eqref{tg2} into the canonical Hamiltonian system is reported in \cite{blankenstein2002matching} where the authors describe a particular choice for $J_2(q,p)$. We summarize such process in the following proposition.
	
	\begin{proposition}\label{prop2}
		Let $$p_c:=M(q)\inv{M}_d(q)p+Q_d(q)$$ with  $Q_d:\rea^n\to\rea^n$ being any smooth vector-valued function. Then, the closed-loop system \eqref{tg1}-\eqref{tg2} results in the canonical Hamiltonian system \eqref{syscanonical}-\eqref{syscanonical2} if and only if
		\begin{equation}\label{prop2eq}
				J_2(q,p):=\hat{J}(q,p)+J_g(q)
		\end{equation}
		where 
		\begin{equation*}
		\arraycolsep=1pt \def\arraystretch{1.6}
		\begin{array}{rl}
		\hat{J}(q,p):=&M_d(q)\inv{M}(q)[(\nabla_q M(q)\inv{M}_d(q)p)^\top\\
			&-\nabla_q(M(q)\inv{M}_d(q)p)]\inv{M}(q)M_d(q),\\
		J_g(q):=	&M_d(q)\inv{M}(q)[(\nabla_{q}Q_d(q))^\top\\
			&-\nabla_{q}Q_d(q)]\inv{M}(q)M_d(q).
		\end{array}
	\end{equation*}
		Moreover, the Hamiltonian of \eqref{syscanonical}-\eqref{syscanonical2} takes the form
		\begin{equation}\label{canham}
		\arraycolsep=1pt \def\arraystretch{1.6}
			\begin{array}{rl}
				H_c(q,p_c)&:=\frac{1}{2}(p_c-Q_d(q))^\top \inv{M}_c(q)(p_c-Q_d(q))+U_d(q)\\
			\end{array}
		\end{equation}
		where
		\begin{equation*}
		\arraycolsep=1pt \def\arraystretch{1.6}
			\begin{array}{rl}
				M_c(q)&:=M(q)\inv{M}_d(q) M(q)\\
				D_c(q,p)&:=M(q)\inv{M}_d(q)D_d(q,p)\inv{M}_d(q)M(q).
			\end{array}
		\end{equation*}
	
	\hfill$\square$
	\end{proposition}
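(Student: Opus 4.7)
The plan is to verify the two rows of the canonical dynamics separately by plugging in the coordinate change $p_c = M M_d^{-1} p + Q_d(q)$, matching coefficients at the end to extract the required form of $J_2$. Throughout, I will write $\Phi(q) := M(q) M_d^{-1}(q)$ for brevity, note that $p_c - Q_d = \Phi p$, and record the identity $M_c^{-1} = M^{-1} M_d M^{-1}$, which immediately gives $\tfrac{1}{2}(p_c-Q_d)^\top M_c^{-1}(p_c-Q_d) = \tfrac{1}{2} p^\top M_d^{-1} p$. So the kinetic energy of the canonical Hamiltonian \eqref{canham} equals that of $H_d$ expressed in the original coordinates, and the potential term coincides by construction; this observation already justifies the claimed form of $H_c$.

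Next I would handle the first row. The target dynamics \eqref{tg1}-\eqref{tg2} yield $\dot q = M^{-1} M_d \nabla_p H_d = M^{-1} p$. Differentiating $H_c$ with respect to $p_c$ gives $\nabla_{p_c} H_c = M_c^{-1}(p_c - Q_d) = M^{-1} M_d M^{-1} \Phi p = M^{-1} p$, matching $\dot q$ regardless of the choice of $J_2$ or $Q_d$. Hence the first row of \eqref{syscanonical} is automatic.

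The second row is where the real work lies. Differentiating the change of variables gives
\begin{equation*}
\dot p_c = \dot \Phi\, p + \Phi\, \dot p + \nabla_q Q_d\, \dot q,
\end{equation*}
and substituting $\dot p = -M_d M^{-1} \nabla_q H_d + (J_2 - D_d) M_d^{-1} p$ from \eqref{tg1}-\eqref{tg2} and $\dot q = M^{-1} p$ produces an expression in the original coordinates. Computing $-\nabla_q H_c - D_c \nabla_{p_c} H_c$ from \eqref{canham} by chain rule (taking care that $\tilde p := p_c - Q_d$ depends on $q$ with $p_c$ fixed, so the $q$-gradient of $H_c$ picks up a $-(\nabla_q Q_d)^\top M_c^{-1} \tilde p$ contribution) yields a second expression. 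Equating the two, the $\nabla_q U_d$ terms cancel, and the identification $D_c = \Phi D_d M_d^{-1} M = M M_d^{-1} D_d M_d^{-1} M$ from the dissipative part falls out directly from $\Phi D_d M_d^{-1} p = D_c M^{-1} p$. What remains is a relation that must be satisfied for every $p \in \rea^n$:
\begin{equation*}
\Phi J_2 M_d^{-1} p = \tfrac{1}{2}\nabla_q(p^\top M_d^{-1} p)\big|_p - \tfrac{1}{2}\nabla_q(\tilde p^\top M_c^{-1} \tilde p)\big|_{\tilde p} - \dot\Phi\, p - \nabla_q Q_d\, M^{-1} p + (\nabla_q Q_d)^\top M_c^{-1} \tilde p.
\end{equation*}
Solving this for $J_2$ by left- and right-multiplying through by $\Phi^{-1}$ and $M_d$ respectively (both invertible) gives a unique $J_2$, which both establishes the ``only if'' direction and, once verified, the ``if'' direction.

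The main obstacle will be carefully reorganizing the $q$-gradient computations so that the $Q_d$-dependent terms group into $J_g$ and the $M_d$-dependent terms group into $\hat J$. Concretely, the mismatch between $\nabla_q(p^\top M_d^{-1} p)|_p$ and $\nabla_q(\tilde p^\top M_c^{-1} \tilde p)|_{\tilde p}$ together with $\dot\Phi\, p$ must collapse, after some matrix algebra using $M_c = M M_d^{-1} M$ and the identity $\nabla_q M_d^{-1} = -M_d^{-1}(\nabla_q M_d)M_d^{-1}$, into the skew combination $M_d M^{-1}[(\nabla_q M M_d^{-1} p)^\top - \nabla_q(M M_d^{-1} p)] M^{-1} M_d \cdot M_d^{-1} p$, i.e., $\hat J(q,p)\nabla_p H_d$; simultaneously the $Q_d$-linear remainder must collapse into $J_g \nabla_p H_d$. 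Once these groupings are carried out, formula \eqref{prop2eq} is read off directly, and skew-symmetry of $J_2$ (required for the interconnection structure to be valid) is manifest from the $(A^\top - A)$ form of both summands, closing the proof.
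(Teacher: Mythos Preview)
The paper does not actually prove Proposition~\ref{prop2}: it is stated with a $\square$ marker and attributed to \cite{blankenstein2002matching}, with the surrounding text saying only that the transformation process ``is reported in'' that reference. So there is no paper proof to compare against; your direct verification is the natural route and is essentially what one expects the cited reference to do.

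Your outline is correct, and the ``main obstacle'' you flag at the end is in fact short once the right identity is isolated. For the $Q_d$-part, using $M_c^{-1}\tilde p = M^{-1}p$ immediately turns the two $Q_d$-terms in your displayed relation into $[(\nabla_q Q_d)^\top - \nabla_q Q_d]M^{-1}p$, which equals $\Phi J_g M_d^{-1}p$ by direct substitution. For the $\hat J$-part, the key step is the chain-rule identity: since $\tfrac{1}{2}p^\top M_d^{-1}p = \tfrac{1}{2}\tilde p^\top M_c^{-1}\tilde p$ as functions of $(q,p)$ with $\tilde p = \Phi(q)p$, differentiating both sides in $q$ gives
\[
\tfrac{1}{2}\nabla_q(p^\top M_d^{-1}p)\big|_p - \tfrac{1}{2}\nabla_q(\tilde p^\top M_c^{-1}\tilde p)\big|_{\tilde p} \;=\; (\nabla_q(\Phi p))^\top M_c^{-1}\tilde p \;=\; (\nabla_q(\Phi p))^\top M^{-1}p,
\]
and combining this with $\dot\Phi\,p = \nabla_q(\Phi p)\,\dot q = \nabla_q(\Phi p)\,M^{-1}p$ collapses the three remaining terms to $[(\nabla_q(\Phi p))^\top - \nabla_q(\Phi p)]M^{-1}p = \Phi\hat J M_d^{-1}p$. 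This closes both directions of the equivalence, and skew-symmetry of $J_2$ is manifest as you note.
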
 
	Note that $J_g(q)\neq0_{n\times n}$~--~equivalently, ${\nabla_q Q_d(q)\neq 0_{n\times n}}$ or $\nabla_q Q_d(q)\neq \nabla_q Q_d(q)^\top$~--~introduces gyroscopic-related forces into the closed-loop system \eqref{syscanonical}-\eqref{syscanonical2} via the term $p_c^\top \inv{M}_c(q)Q_d(q)$ from the canonical Hamiltonian \eqref{canham}.  The introduction of gyroscopic-related forces has interesting positive effects on the performance in terms of stabilization and oscillations.  For instance, in \cite{wesselink,chan2021passivity}, the authors demonstrate~--~via experiments~--~that the inclusion of $J_g(q)$ reduces the oscillations in some coordinates for underactuated mechanical systems; moreover, the addition of this term is crucial for stabilizing underactuated mechanical applications such as spacecraft control and underwater vehicle control \cite{blankenstein2002matching,chang2002equivalence,borja2022role}.
	
	Furthermore, we can characterize the gyroscopic terms by dividing it into \textit{intrinsic} or \textit{non-instrinsic} of which we provide the definition as follows
	\begin{definition}[Intrinsic gyroscopic terms \cite{blankenstein2002matching}]\label{def1}
		The gyroscopic terms are called \textit{intrinsic} if there \textit{does not} exist a canonical transformation $(q,p_c)\mapsto (\tilde{q}_c,\tilde{p}_c)$ such that the Hamiltonian in the new coordinates takes the form of the kinetic plus the potential energy, i.e.,
		\begin{equation*}
			\tilde{H}_c(\tilde{q}_c,\tilde{p}_c) = \frac{1}{2}\tilde{p}_c ^\top \inv{\tilde{M}}(\tilde{q}_c)\tilde{p}_c+\tilde{U}(\tilde{q}_c),
		\end{equation*}
	for some $\tilde{M}:\rea^n\to\rea^{n\times n}$ and $\tilde{U}:\rea^n\to\rea_+$.
	\\
	\end{definition}
	
	Then, the following proposition verifies the intrinsic property.
	\begin{proposition}\label{prop3}
		The gyroscopic terms are intrinsic to the closed-loop system \eqref{syscanonical}-\eqref{syscanonical2} if and only if $J_g\neq 0_{n\times n}$.
	\hfill$\square$
	\end{proposition}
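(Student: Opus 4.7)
The plan is to establish both directions of the equivalence separately, treating the $J_g = 0_{n\times n} \Rightarrow$ not intrinsic implication by explicit construction, and the converse by a symplectic-invariance argument borrowed from \cite{blankenstein2002matching}.

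First I would prove the contrapositive of the ``only if'' direction. From the explicit expression for $J_g(q)$ in Proposition \ref{prop2}, together with the invertibility of $M$ and $M_d$, the condition $J_g = 0_{n\times n}$ is equivalent to $\nabla_q Q_d(q)$ being symmetric, which, on a simply connected domain and by the Poincar\'e lemma, is equivalent to the existence of a scalar function $\Phi:\rea^n \to \rea$ with $Q_d(q) = \nabla_q \Phi(q)$. I would then introduce the momentum shift $(q, p_c) \mapsto (\tilde{q}_c, \tilde{p}_c) := (q, p_c - Q_d(q))$; this map is canonical because it is generated by $F(q, \tilde{p}_c) = q^\top \tilde{p}_c + \Phi(q)$ via $p_c = \nabla_q F = \tilde{p}_c + Q_d(q)$ and $\tilde{q}_c = \nabla_{\tilde{p}_c} F = q$. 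Substituting $p_c = \tilde{p}_c + Q_d(\tilde{q}_c)$ into the canonical Hamiltonian \eqref{canham} reduces $H_c$ to
\begin{equation*}
\tilde{H}_c(\tilde{q}_c, \tilde{p}_c) = \tfrac{1}{2}\tilde{p}_c^{\top}\inv{M}_c(\tilde{q}_c)\tilde{p}_c + U_d(\tilde{q}_c),
\end{equation*}
which is in the kinetic-plus-potential form of Definition \ref{def1} with $\tilde{M} = M_c$ and $\tilde{U} = U_d$; hence the gyroscopic terms are not intrinsic.

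For the converse implication, $J_g \neq 0_{n\times n} \Rightarrow$ intrinsic, I would argue by contradiction: suppose $J_g \neq 0_{n\times n}$ yet some canonical transformation brings $H_c$ to kinetic-plus-potential form. The key observation is that the 2-form associated with $\nabla_q Q_d - (\nabla_q Q_d)^\top$ is a symplectic invariant, in the sense that its pullback is preserved by every canonical change of coordinates. Since a Hamiltonian of kinetic-plus-potential form has this 2-form identically zero, invariance would force $\nabla_q Q_d$ to be symmetric and thus $J_g = 0_{n\times n}$, contradicting the hypothesis.

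The main obstacle is rigorously justifying the symplectic-invariance claim invoked in the second direction, since it requires differential-geometric machinery beyond the notation adopted in this paper. My proof would therefore appeal directly to the corresponding result in \cite{blankenstein2002matching} for the converse, while keeping the first (constructive) direction self-contained via the generating-function argument above.
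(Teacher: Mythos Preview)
Your proposal is correct. Note, however, that the paper does not actually prove Proposition~\ref{prop3}: it states the result and defers entirely to \cite{blankenstein2002matching}. Your write-up therefore goes further than the paper on the ``$J_g=0\Rightarrow$ not intrinsic'' direction, giving the explicit generating-function construction that the paper omits; for the converse you end up doing exactly what the paper does, namely citing \cite{blankenstein2002matching}. Two small remarks on your constructive direction: (i) the simply-connected hypothesis you invoke for the Poincar\'e lemma is not among the paper's standing assumptions, so you should flag it as an additional (local) assumption; (ii) the Poincar\'e-lemma step is in fact optional, since the momentum shift $(q,p_c)\mapsto(q,p_c-Q_d(q))$ is canonical precisely when $\nabla_q Q_d$ is symmetric --- one can verify this directly from the symplectic condition on the Jacobian without ever producing a global potential $\Phi$.
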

	For further details on Proposition \ref{prop3}, see \cite{blankenstein2002matching}.
	\subsection{Some stability properties}
	Throughout this paper, we consider two stability properties for the system \eqref{tg1}-\eqref{tg2}: the ES and ISS. The former ensures that an exponential decay function bounds the closed-loop system trajectories, while the latter ensures that the system trajectories are bounded for any initial conditions as long as the input is also bounded. Although both properties reveal interesting behaviors of the closed-loop system, these are only \textit{qualitative} attributes. Therefore, they may not provide explicit information for tuning purposes. In order to use these properties in a \textit{quantitative} manner, we provide some concepts that allow us to exploit the stability properties for tuning purposes.
	
	To this end, let us first introduce a definition related to the ES property.
	\begin{definition}[Rate of convergence \cite{khalil2002nonlinear}]
		The {rate of convergence} of the closed-loop system \eqref{tg1}-\eqref{tg2} is the exponential decay value of the trajectories of the system approaching the equilibrium  $(q_\star,0_n)$. We can characterize this value by defining some constants $k_1,k_2,k_3\in\rea_+$ such that
		\begin{equation*}
			\norm{col(q,p)}\leq \sqrt{\frac{k_2}{k1}}\norm{col(q_0,p_0)}\exp\left\{-\frac{k_3}{2k_2}(t-t_0)\right\},
		\end{equation*}
		where $\frac{k_3}{2k_2}$ corresponds to the upper bound of the \textit{rate of convergence}, $t_0\geq0$ is the initial time, and $q_0,p_0\in\rea^n$ are the initial conditions.
		\\
	\end{definition}
	
	On the other hand, consider the closed-loop system \eqref{tg1}-\eqref{tg2} with a disturbance signal, that is
	\begin{equation}\label{disturbed}
		\begin{bmatrix}
			\dot{{q}}\\\dot{{p}}
		\end{bmatrix}=({J}_{d}({q},{p})-{R}_{d}({q},{p}))\nabla {H}_{d}({q},{p})+d(t,q,p)
	\end{equation} 
	where $d:\rea_{\geq0}\times\rea^n\times\rea^n\to\rea^{2n}$ is a vector of disturbances, satisfying $\norm{d(t,q,p)}\leq \infty$. Then, we provide the following ISS related definitions:	
	\begin{definition}[Comparison functions \cite{khalil2002nonlinear}]
		~
		\begin{itemize}
			\item A continuous function $\alpha:[0,a)\mapsto[0,\infty)$ is said to belong to class $\mathcal{K}$ if it is strictly increasing and $\alpha(0)=0$. Moreover, if $\alpha\in\mathcal{K}$, $a=\infty$ and $\alpha(r)\to \infty$ as $r\to\infty$, then it belongs to class $\mathcal{K}_\infty$.
			\item A continuos function  $\alpha:[0,a)\times [0,\infty)\mapsto[0,\infty)$ is said to belong to class $\mathcal{KL}$ if i) for each fixed $s$, the mapping $\alpha(r,s)$ belongs to class $\mathcal{K}$ with respect to r, and ii) for each fixed $r$, the mapping $\alpha(r,s)$ is decreasing with respect to s and $\beta(r,s)\to0$ as $s\to\infty$.
			\\
		\end{itemize}
	\end{definition}
	\begin{definition}[Stability margin \cite{sontag1995characterizations}]\label{smdef}
		Consider system \eqref{disturbed}. Then, the nonlinear stability margin is any function $\rho\in\mathcal{K}_\infty$ that verifies
		\begin{equation}\label{smargin}
			\norm{d(t,q,p)}\leq\rho(\norm{col(q,p)})
		\end{equation}
		and 
		\begin{equation}\label{smargin2}
			\norm{col(q,p)}\leq\beta(\norm{col(q_0,p_0)},t) ~\forall t\geq0,
		\end{equation}
		where $\beta\in \mathcal{K}\mathcal{L}$.
		
		Moreover, the system \eqref{disturbed} is said to be ISS if and only if \eqref{smargin} and \eqref{smargin2} are satisfied.
			\\
	\end{definition}

	\subsection{Problem formulation}	
	Given $(q_\star,0_n)\in\mathcal{E}$, propose tuning methodologies to select the system matrices $M_d(q,p)$, $J_2(q,p)$, $D_d(q,p)$, and Hamiltonian $H_d(q,p)$ such that the closed-loop system \eqref{tg1}-\eqref{tg2} exhibits a prescribed behavior.
	
	These values are referred as  \textit{control parameters} for the rest of this manuscript.

	%%%%%%%%%%%%%%%%%%%%%%%%%%%%%%%%%%%%%%%%%%%%%%%%%%%%%%%%%%%%%%%%%%%%%%%%%%%%%%%%%%%%%%%%%%%%%%%%
	\section{Quantifying the ES and ISS properties}\label{nonlinear}
	
	In this section, we exploit the ES and ISS properties in a \textit{quantitative} manner by selecting an appropriate Lyapunov function candidate, of which we deduce the tuning rules for the upper bound of the \textit{rate of convergence}, \textit{maximum permissible overshoot}, and the \textit{stability margins} of the system. Towards this end, we split the section into two parts: i) first, we describe rules for the non-perturbed system \eqref{tg1}-\eqref{tg2}, and then, ii) we provide guidelines for the perturbed system \eqref{disturbed}.
	
	\subsection{Tuning guidelines for the non-perturbed system}\label{nonperturbed}
	
	Interesting properties for \eqref{tg1}-\eqref{tg2} are revealed through a convenient choice of a Lyapunov candidate. In \cite{chan2021exponential}, the authors provide a tuning rule for the upper bound of the rate of convergence obtained from Lyapunov stability analysis. 
   	In our manuscript, we extend such an approach, where we exploit the Lyapunov candidate to deduce the upper bound of the rate of convergence for a broader class of mechanical systems stabilizable via a larger class of PBC techniques. We also provide a novel expression for tuning the upper bound of the maximum permissible overshoot. To this end, consider the following assumption
   	\\
	\begin{assumption}\label{ass2}
		The control parameters from the closed-loop system \eqref{tg1}-\eqref{tg2} are chosen such that
		\begin{itemize}
			\item [\textbf{C1}.] $U_d(q)$ is strongly convex.
			\item[\textbf{C2}.] $\norm{M_d(q)}<c$ for some positive constant satisfying ${c<\infty}$.
			\item[\textbf{C3}.] ${D}_{d}(q,p)\succ0$.
		\end{itemize}
	\end{assumption}
	~\\
	Then, we prove  ES stability of the equilibrium point for the closed-loop system \eqref{tg1}-\eqref{tg2} in the following theorem
	\begin{theorem}\label{prop4}
		Consider the closed-loop system \eqref{tg1}-\eqref{tg2}, the desired equilibrium $x_\star:=(q_\star,0_n)\in\mathcal{E}$, and Assumption \ref{ass2}. Then,
		\begin{itemize}
			\item [(i)] $x_\star$ is an exponentially stable equilibrium.
			\item [(ii)] $x_\star$ is globally exponentially stable if $U_d(q)$ is radially unbounded.
	\end{itemize}
	\end{theorem}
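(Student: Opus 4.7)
The plan is to use the shifted Hamiltonian as the base Lyapunov candidate and then augment it with a cross term to obtain a strict Lyapunov function, following the spirit of \cite{chan2021exponential} but adapted to the broader class of target dynamics \eqref{tg1}-\eqref{tg2} that includes $J_2(q,p)$.

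First, I would set $V_0(q,p):=H_d(q,p)-H_d(q_\star,0_n)$ and establish quadratic sandwich bounds
$\alpha_1 \norm{col(q-q_\star,p)}^2 \leq V_0(q,p) \leq \alpha_2 \norm{col(q-q_\star,p)}^2$
on a neighborhood $\Omega$ of $x_\star$. The lower bound on $U_d(q)-U_d(q_\star)$ follows from \textbf{C1} (strong convexity), the lower bound $\tfrac{1}{2c}\norm{p}^2$ on the kinetic part follows from \textbf{C2}, and the matching upper bounds come from smoothness of $U_d$ and positive definiteness of $M_d(q)$ near $q_\star$ (so $M_d^{-1}$ is uniformly bounded below on $\Omega$). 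For claim (ii), radial unboundedness of $U_d$ together with strong convexity promotes the lower bound to all of $\rea^n\times\rea^n$ and $\Omega$ can be taken to be the entire state space.

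Second, I would compute $\dot V_0$ along trajectories of \eqref{tg1}-\eqref{tg2}. Because $J_d-R_d$ has the pH structure, the skew part (including $J_2$) drops out and we obtain
\begin{equation*}
\dot V_0 = -\nabla_p H_d^\top D_d \nabla_p H_d = -p^\top M_d^{-1} D_d M_d^{-1} p \leq -\beta_p \norm{p}^2,
\end{equation*}
with $\beta_p>0$ coming from \textbf{C3} and the boundedness of $M_d^{-1}$. This is only negative semi-definite in $col(q-q_\star,p)$, so it does not directly yield ES.

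Third, to obtain a strict Lyapunov function I would introduce the augmented candidate
\begin{equation*}
V_\epsilon(q,p) := V_0(q,p) + \epsilon\,\Psi(q,p),\qquad \Psi(q,p):=\nabla U_d(q)^\top M_d(q)^{-1} p,
\end{equation*}
for a small parameter $\epsilon>0$. Differentiating $\Psi$ along \eqref{tg1}-\eqref{tg2} produces a term proportional to $-\nabla U_d(q)^\top M^{-1}(q)\nabla U_d(q)$ (which is coercive in $\norm{q-q_\star}^2$ locally, via \textbf{C1} and \eqref{boundM}), plus cross terms in $p$ and $q$ that can be dominated by the existing $-\beta_p\norm{p}^2$ using Young's inequality. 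Choosing $\epsilon$ small enough to preserve the quadratic sandwich bounds on $V_\epsilon$ and to keep $\dot V_\epsilon \leq -\alpha V_\epsilon$ for some $\alpha>0$, the comparison lemma delivers exponential decay of $V_\epsilon$, and hence of $\norm{col(q-q_\star,p)}$ via the sandwich bounds; this proves (i), and the globalization of the bounds under radial unboundedness yields (ii).

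The main obstacle will be bookkeeping in step three: the derivative of $\Psi$ produces contributions from $M_d(q)^{-1}$, from the Coriolis-like term $\hat J$, and from any gyroscopic $J_g$ inside $J_2$, all of which are state-dependent. Ensuring that these contributions are uniformly dominated by the coercive terms on $\Omega$~--~without losing the structural preservation that motivates the tuning interpretation~--~is the delicate part, and is what forces the neighborhood restriction in (i) and the radial-unboundedness hypothesis in (ii).
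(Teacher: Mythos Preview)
Your plan is correct and shares the same core idea as the paper: augment the shifted Hamiltonian with a small cross term of the form $\epsilon\,(\text{gradient of }U_d)^\top(\cdot)\,p$ so that the derivative acquires a coercive piece $-\nabla U_d^\top M^{-1}\nabla U_d$, then close with Young/Schur arguments and the comparison lemma. Your cross term $\Psi=\nabla U_d^\top M_d^{-1}p$ produces exactly that coercive piece once you substitute $\dot p$, so the sketch goes through.

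The paper, however, inserts one device you skip: before building the Lyapunov function it performs the Cholesky-based change of momenta $\hat p:=T_d(q)^\top p$ with $M_d^{-1}=T_dT_d^\top$, which normalizes the kinetic energy to $\tfrac12\|\hat p\|^2$. In the transformed coordinates the cross term becomes $\epsilon\,\hat p^\top\hat A^\top\nabla_{\hat q}\hat U$ (equivalently $\epsilon\,p^\top M^{-1}\nabla_qU_d$ in your coordinates, i.e.\ with the \emph{open-loop} $M^{-1}$ rather than your $M_d^{-1}$), and $\dot S$ collapses to a single quadratic form $-\nabla\hat H^\top\Upsilon_{\tt sym}\nabla\hat H$ with an explicit $2\times2$-block matrix $\Upsilon_{\tt sym}$. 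That clean block structure is precisely what feeds the tuning corollaries (rate \eqref{rate}, overshoot \eqref{ov}, gain margin \eqref{gainmargin}); your direct-coordinate route would establish ES but leave the constants buried inside the ``bookkeeping'' you flag, which is the whole point of the paper. So: same proof idea, but the coordinate change is the technical step that buys the quantitative tuning expressions.

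One small caution on (ii): both your argument and the paper's rely on more than radial unboundedness of $U_d$ alone; the global quadratic \emph{upper} bound on $V_\epsilon$ (needed for Khalil's Theorem~4.10) additionally requires $\nabla^2U_d$ to be bounded above, which in the paper is implicitly folded into the constant $\beta_{\max}$ of \eqref{betas}. Make that assumption explicit when you carry out the globalization.
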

	
	\begin{proof}
		To prove (i), consider the matrix decomposition
		\begin{equation*}
			\inv{M_d}(q)= T_d(q) T_d^\top(q), 
		\end{equation*}
		where $T_d:\rea^n\to\rea^n\times\rea^n$ is a full rank upper triangular matrix with strictly positive diagonal entries (see Cholesky decomposition \cite{horn2012matrix}). Furthermore, we introduce the change of coordinates (described first in \cite{venkatraman2010speed})
		\begin{equation}\label{change}
			\hat{q}:=q-q_\star,~\hat{p}:=T_d(q)^\top p.
		\end{equation}
		
		Then, define $$\hat{x}:=col(\hat{q},\hat{p}),$$ 
		
		and by transforming \eqref{tg1}-\eqref{tg2} with \eqref{change}, we get the new pH system
		
		\begin{equation}\label{transformed}
			\def\arraystretch{1.6}
			\begin{array}{rcl}
				\begin{bmatrix}
					\dot{\hat{q}}\\\dot{\hat{p}}
				\end{bmatrix}&=&\begin{bmatrix}
					0_{n\times n} &\hat{A}(\hat{q})\\
					-\hat{A}^\top(\hat{q})&\hat{J}(\hat{x})-\hat{D}(\hat{x})
				\end{bmatrix}\begin{bmatrix}
					\nabla_{\hat{q}}\hat{H}(\hat{x})\\\nabla_{\hat{p}}\hat{H}(\hat{x})\end{bmatrix}	\\
				\hat{H}(\hat{x})&=&\frac{1}{2}\hat{p}^\top\hat{p}+\hat{U}(\hat{q}),
			\end{array}
		\end{equation}
		where
		{\begin{equation*}
				\def\arraystretch{1.6}
			\begin{array}{rcl}
				\hat{U}(\hat{q})&:=&U_d(\hat{q}+q_\star)\\
				\hat{T}_d(\hat{q})&:=&T_d(\hat{q}+q_\star)	\\
				\hat{A}(\hat{q})&:=&\inv{M}(\hat{q}+q_\star)\hat{T}_d^{-\top}(\hat{q})\\	\hat{D}(\hat{x})&:=&\hat{T}_d^\top(\hat{q})D_d(\hat{q}+q_\star,\tinv{T_d}\hat{p})\hat{T}_d(\hat{q})\\
				\hat{J}(\hat{x})&:=&\hat{T}_d^\top(\hat{q}) J_2(\hat{q}+q_\star,\tinv{T_d}\hat{p})\hat{T}_d(\hat{q})+\\
				&&\displaystyle \sum^n_{i=1}\Big\{\left[\hat{p}^\top\inv{\hat{T}_d}(\hat{q})\pdev{\hat{T}_d(\hat{q})}{q_i}\right]^\top \left[\hat{A}^\top(\hat{q}) e_i\right]^\top-\\
				&& \left[\hat{A}^\top(\hat{q}) e_i\right]\left[\hat{p}^\top\inv{\hat{T}_d}(\hat{q})\pdev{\hat{T}_d(\hat{q})}{q_i}\right]\Big\}	
			\end{array}.
		\end{equation*}}
		
		Note that by implementing the change of coordinates \eqref{change}, it follows that the new equilibrium $\hat{x}_\star$ is the origin. Now, we prove the ES of the origin for \eqref{transformed} by considering the Lyapunov candidate
		\begin{equation}\label{candidate}
			S(\hat{x}):=\hat{H}(\hat{x})+\epsilon\hat{p}^\top\hat{A}^\top(\hat{q})\nabla_{\hat{q}}\hat{U}(\hat{q}),
		\end{equation}
		with $\epsilon\in\rea_+$.

		Then, due to Assumption \ref{ass2}, note that $\hat{H}(\hat{x})$ satisfies the bounds
		\begin{equation}\label{Hub}
			\frac{\beta_{\min}}{2}\norm{\hat{x}}^2\leq\hat{H}(\hat{x})\leq\frac{\beta_{\max}}{2}\norm{\hat{x}}^2,
		\end{equation}
		with
		\begin{equation}\label{betas}
			\begin{array}{rcl}
				\beta_{\max}&:=&\max\{1,\maxlamb{\nabla^2_{\hat{q}} \hat{U}(\hat{q})}\}\\
				\beta_{\min}&:=&\min\{1,\minlamb{\nabla^2_{\hat{q}} \hat{U}(\hat{q})}\}.
			\end{array}
		\end{equation}
		Furthermore, by applying Young's inequality\footnote{For $a,b\in\rea$ and $\epsilon_y \in\rea_+$, Young's inequality is given by $ab\leq\frac{a^2}{2\epsilon_y}+\frac{\epsilon_y b^2}{2} $}, we have that 	
		\begin{equation}\label{C}
			\begin{split}
				\norm{\epsilon{\hat{p}}^\top \hat{A}^\top\nabla_{\hat{q}}\hat{U}}
				\leq\frac{\epsilon\sigma_{\max}(\hat{A})\beta_{\max}^2}{2} \norm{\hat{x}}^2.
			\end{split}
		\end{equation}
		Also, note that from Assumption \ref{ass2}, we get the following chain of implications
		\begin{equation*}
			\begin{array}{rc}
			\norm{M_d}<\infty\implies\norm{\hat{T}_d}<\infty
			\implies \norm{\hat{A}}<\infty.		
			\end{array}
		\end{equation*}
		Thus, from \eqref{Hub} and \eqref{C}, we get 
		\begin{equation}\label{cond1}
			\begin{split}
				&k_1\norm{\hat{x}}^2\leq S(\hat{x})\leq k_2\norm{\hat{x}}^2
			\end{split}
		\end{equation}
		with
		\begin{equation}\label{k1k2}
			\begin{array}{rc}
				k_1&:=\frac{\beta_{min}-\epsilon\sigma_{\max}(\hat{A})\beta_{\max}^2}{2},~
				k_2:=\frac{\beta_{\max}+\epsilon\sigma_{\max}(\hat{A})\beta_{\max}^2}{2}.
			\end{array}
		\end{equation}
		{Note that there exists a \textit{sufficiently small}  $\epsilon$ such that $k_1\in\rea_{+}$. Hence, $S(\hat{x})\in\rea_+$ for all $\hat{x}\neq 0_n$.}
		
		Then, via some computations, it follows that the derivative of $S(\hat{x})$ is given by  
		\begin{equation*}
			\begin{split}
				\dot{S}(\hat{x})&= -\nabla^\top \hat{H}(\hat{x}) \Upsilon_{\tt{sym}}(\hat{x})\nabla \hat{H}(\hat{x})
			\end{split}
		\end{equation*}
		where the matrix $\Upsilon_{\tt{sym}}(\hat{x})$ is defined as
		\begin{equation}\label{upsilonsym2}
		\arraycolsep=1pt \def\arraystretch{1.6}
			\begin{array}{lll}
				\Upsilon_{\tt{sym}}(\hat{x})&:= \begin{bmatrix}
					\Upsilon_{11}&\Upsilon_{12}\\
					\Upsilon_{12}^\top&\Upsilon_{22}
				\end{bmatrix},\\
				\Upsilon_{11}(\hat{q})&:=\epsilon (\hat{A}(\hat{q})\hat{A}^\top(\hat{q})+\hat{A}^\top(\hat{q})\hat{A}(\hat{q})),\\
				\Upsilon_{12}(\hat{x})&:=\frac{\epsilon}{2}[\hat{A}(\hat{q})[\hat{D}(\hat{x})-\hat{J}(\hat{x})]-\dot{\hat{A}}(q)],\\
				\Upsilon_{22}(\hat{x})&:=\hat{D}(\hat{x})-\epsilon(\hat{A}^\top(\hat{q}) \nabla_{\hat{q}}^2\hat{U}(\hat{q})\hat{A}(\hat{q})).
			\end{array}
		\end{equation} 
	
		Then, note that $\Upsilon_{\tt{sym}}$ must be positive definite for $S(\hat{x})$ qualifying as a suitable Lyapunov candidate. It follows that we can demonstrate that $\Upsilon_{\tt{sym}}\succ0$ by applying the Schur complement analysis, i.e., observe that $\Upsilon_{11}\succ0$ and there always exists a \textit{sufficiently small} $\epsilon$ such that $\Upsilon_{22}\succ 0$ and the Schur complement of $\Upsilon_{11}$ is also positive definite, that is,
		\begin{equation}\label{schur}
			\Upsilon_{11}-\Upsilon_{12}\inv{\Upsilon_{22}}\Upsilon_{12}^\top\succ0,
		\end{equation}
		since $\hat{D}(\hat{x})\succ 0$.
		
		Subsequently, let $\mu\in\rea_+$ be the minimum eigenvalue of $\Upsilon_{\tt{sym}}(\hat{x})$, then, it follows that
		\begin{equation}\label{cond2}
			\begin{array}{rc}
				\dot{S}(\hat{x})\leq& -\mu \norm{\nabla \hat{H}(\hat{x})}^2\leq-\mu\beta_{\max}^2\norm{\hat{x}}^2.
			\end{array}
		\end{equation}
		
		Therefore, from \eqref{cond1} and \eqref{cond2}, $\hat{x}_\star$ is an exponentially stable equilibrium point for \eqref{transformed} (see Theorem 4.10 from \cite{khalil2002nonlinear}). 
		
		To prove (ii), we have the following chain of implications:
		\begin{equation*}
			\hat{q}\to\infty,~\hat{p}\to\infty\implies\hat{U}(\hat{q})\to\infty\implies S(\hat{x})\to\infty.
		\end{equation*}
	\end{proof}
	\begin{remark}
	 {The term $\epsilon$ is used in i) ensuring that the Schur complement of $\Upsilon_{11}$ is positive definite, and ii) verifying that $S(\hat{x})\in\rea_{+}$ for all $\hat{x}\neq 0_n$, i.e., there always exists a \textit{sufficiently small} $\epsilon$ such that $k_1$ from \eqref{k1k2} is positive.}
		\\
	\end{remark}
\begin{remark}
		We remark that the proof from Theorem \ref{prop4} considers the closed-loop system \eqref{tg1}-\eqref{tg2}, which can be obtained through different PBC approaches. Thus, it is more general than \cite{chan2021exponential}, where a standard mechanical system is in closed-loop with a particular PBC approach (namely, PID-PBC).
	\end{remark}
	~\\
	Note that from \eqref{cond1}-\eqref{cond2} we get that
	\begin{equation*}
		\dot{S}(\hat{x})\leq-\frac{2\mu\beta_{\max}}{1+\epsilon\sigma_{\max}(\hat{A}(\hat{q}))\beta_{\max}}S(\hat{x}).
	\end{equation*}
	Then, via the comparison lemma (see \cite{khalil2002nonlinear}), we have that the solution of \eqref{transformed} is bounded, i.e., 
	\begin{equation}\label{boundX}
		\norm{\hat{x}}\leq\sqrt{\frac{k_2}{k_1}}\norm{\hat{x}_0}\exp\left\{-	\frac{\mu\beta_{\max}}{1+\epsilon\sigma_{\max}(\hat{A}(\hat{q}))\beta_{\max}}t\right\},
	\end{equation}
	where $\hat{x}_0\in\rea^{2n}$ are the initial conditions in the new coordinates.
	
	It follows that we can exploit the inequality \eqref{boundX} and obtain two expressions: i) an upper bound for the rate of convergence, and ii) an upper bound for the maximum overshoot of output of the system.
	
	For the latter, note that the transformed output ~--~i.e., ${\hat{y}:=G^\top\hat{T}(\hat{q})\hat{p}}$~--~verifies the following
	\begin{equation*}
		\begin{array}{rcl}
			\norm{\hat{y}}&=&\norm{G^\top\hat{T}_d(\hat{q})\hat{p}}\\
			 &\leq&\sigma_{\max}(G^\top\hat{T}_d(\hat{q}))\norm{\hat{p}}\\
			 &\leq&\sigma_{\max}(G^\top\hat{T}_d(\hat{q}))\norm{\hat{x}}.
		\end{array}
	\end{equation*}
	Then, from \eqref{boundX}, it follows that 
	$$ \norm{\hat{y}}\leq\xi\exp\left\{-	\frac{\mu\beta_{\max}}{1+\epsilon\sigma_{\max}(\hat{A}(\hat{q}))\beta_{\max}}t\right\},$$
	with	
	\begin{equation}\label{ov}
		\xi:=\sigma_{\max}(G^\top\hat{T}_d(\hat{q}))\sqrt{\frac{k_2}{k_1}}\norm{\hat{x}_0}
	\end{equation}
	where $k_1,k_2$ are defined in \eqref{k1k2}. Therefore, we have proven the following result.
	\begin{corollary}\label{cor1}
	 The convergence rates of the trajectories of \eqref{transformed} are upper bounded by
		\begin{equation}\label{rate}
			\frac{\mu\beta_{\max}}{1+\epsilon\sigma_{\max}(\hat{A}(\hat{q}))\beta_{\max}}.
		\end{equation} 
		Moreover, the maximum overshoot of the system output $\hat{y}$ is upper bounded by \eqref{ov}.
		\hfill $\blacksquare$
		\\
	\end{corollary}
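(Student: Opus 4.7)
The plan is to bundle the inequalities already assembled in the proof of Theorem~\ref{prop4} into an exponential envelope for $\norm{\hat{x}}$, from which both claims of the corollary drop out immediately. First I would combine the upper Lyapunov bound $S(\hat{x})\leq k_2\norm{\hat{x}}^2$ from \eqref{cond1} with the dissipation estimate $\dot{S}(\hat{x})\leq -\mu\beta_{\max}^2\norm{\hat{x}}^2$ from \eqref{cond2} to produce the scalar inequality $\dot{S}\leq -(\mu\beta_{\max}^2/k_2)\,S$. Substituting the definition of $k_2$ from \eqref{k1k2} and simplifying, the decay coefficient becomes $2\mu\beta_{\max}/(1+\epsilon\sigma_{\max}(\hat{A})\beta_{\max})$, which is exactly twice the rate appearing in \eqref{rate}.

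Next, I would invoke the comparison lemma to integrate this linear differential inequality and bound $S(\hat{x}(t))$ by $S(\hat{x}_0)$ times the resulting exponential. Using the lower bound $k_1\norm{\hat{x}}^2\leq S(\hat{x})$ on the left-hand side and the upper bound $S(\hat{x}_0)\leq k_2\norm{\hat{x}_0}^2$ on the right-hand side, and taking a square root, I recover \eqref{boundX} with the factor $\sqrt{k_2/k_1}$ in front and the half-rate $\mu\beta_{\max}/(1+\epsilon\sigma_{\max}(\hat{A})\beta_{\max})$ inside the exponential. This exponent is precisely the quantity in \eqref{rate}, which settles the first assertion of the corollary.

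For the overshoot bound I would start from the transformed output $\hat{y}=G^\top \hat{T}_d(\hat{q})\hat{p}$, apply the singular-value inequality $\norm{\hat{y}}\leq \sigma_{\max}(G^\top\hat{T}_d(\hat{q}))\norm{\hat{p}}$, and then dominate $\norm{\hat{p}}$ by $\norm{\hat{x}}$. Plugging \eqref{boundX} into this chain yields an exponentially decaying envelope for $\norm{\hat{y}(t)}$ with prefactor $\xi$ as defined in \eqref{ov}. Since the exponential factor is at most $1$ for every $t\geq 0$, its supremum in time is attained at $t=0$ and equals $\xi$, delivering the claimed upper bound on the maximum overshoot.

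The only subtlety I anticipate, rather than a true obstacle, is the state dependence of $\hat{T}_d(\hat{q})$: strictly speaking $\sigma_{\max}(G^\top\hat{T}_d(\hat{q}))$ varies along the trajectory, so to obtain a trajectory-independent $\xi$ I would take its supremum over the sublevel set of $\hat{H}$ that contains the initial condition, which is finite by condition~C2 of Assumption~\ref{ass2} together with \eqref{boundM}. Since all the heavy lifting — the Lyapunov candidate, the matrix $\Upsilon_{\tt{sym}}$, and the choice of $\epsilon$ — was completed inside the proof of Theorem~\ref{prop4}, nothing further beyond these algebraic manipulations is needed.
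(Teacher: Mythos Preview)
Your proposal is correct and follows essentially the same route as the paper: combine \eqref{cond1} and \eqref{cond2} into $\dot{S}\leq -(\mu\beta_{\max}^2/k_2)S$, apply the comparison lemma and the sandwich bounds on $S$ to obtain \eqref{boundX}, and then chain $\norm{\hat{y}}\leq \sigma_{\max}(G^\top\hat{T}_d)\norm{\hat{p}}\leq \sigma_{\max}(G^\top\hat{T}_d)\norm{\hat{x}}$ with \eqref{boundX} to reach \eqref{ov}. Your remark about taking a supremum of $\sigma_{\max}(G^\top\hat{T}_d(\hat{q}))$ over a sublevel set is a welcome clarification that the paper leaves implicit.
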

	
	Note that \eqref{ov} and \eqref{rate} are expressed in terms of $\beta_{\max}$,$\beta_{\min}$, $\sigma_{\max}(\hat{A}(\hat{q}))$, $\epsilon$ and $\mu$. The parameters $\beta_{\max}$ or $\beta_{\min}$ can be computed easily from the potential energy, and $\sigma_{\max}(\hat{A}(\hat{q}))$ can be obtained directly from the kinetic energy. Conversely, the computation of $\epsilon$ and $\mu$ is a challenge; nonetheless, we can still employ other well-known tools to study the behavior of these parameters. For $\mu$, we can employ the Gershgorin circle theorem (see \cite{horn2012matrix} for further details), which defines circles containing the location of the spectrum of $\Upsilon_{\tt{sym}}$. Each circle is defined by the $i^{th}$ row elements (with $i=1\hdots n$), with the center being the diagonal element and the radius being the sum of the absolute values of the non-diagonal entries. For example, in Fig. \ref{gershgorin} we show the Gershgorin circles for some $\Upsilon_{\tt{sym}}$ (with $n=4$); note that by augmenting the diagonal elements of \eqref{upsilonsym2}, the centers of the circles are shifted to the right; consequently, $\mu$ may increment. 
	\begin{figure}[t]
	\centering
	\includegraphics[width=\columnwidth]{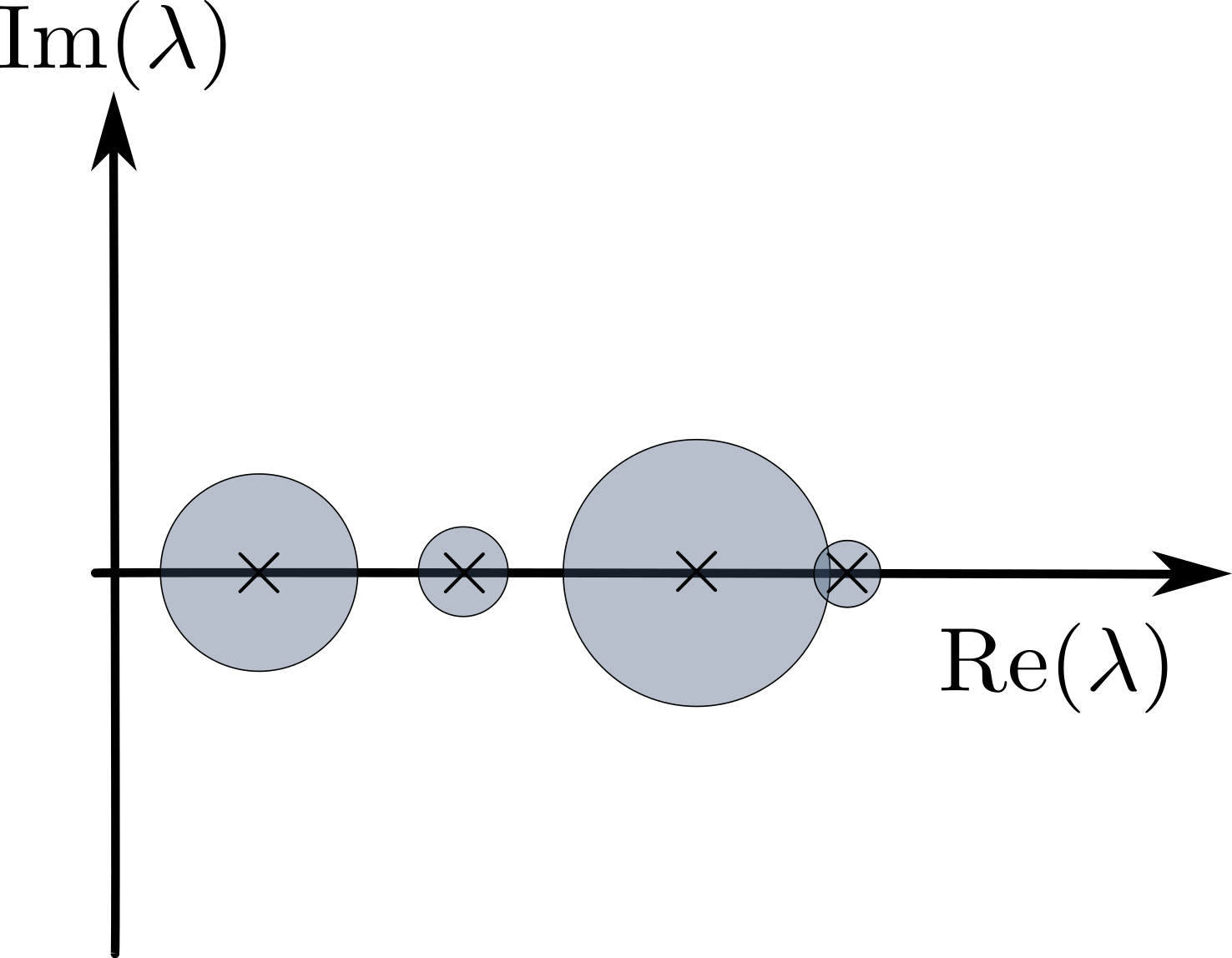}
	\caption{Gershgorin circles for $\Upsilon_{\tt{sym}}$}\label{gershgorin}
	\end{figure}
	As for $\epsilon$, we can employ the Schur complement analysis tool. For instance, note that  $\Upsilon_{12}$ from \eqref{upsilonsym2} increases as $\hat{D}(\hat{x})$ increases. Thus, it follows that {$\epsilon$ must be adjusted} to guarantee that condition \eqref{schur} still holds. 
	
	Since $\beta_{\max}$, $\beta_{\min}$, $\sigma_{\max}(\hat{A}(\hat{q}))$, $\epsilon$ and $\mu$ are related with the control parameters of the closed-loop systems, we can employ \eqref{ov} and \eqref{rate} as guidelines to prescribe the maximum permissible overshoot of the closed-loop system and the desired upper bound for the decay ratio of the trajectories, respectively.

	\begin{remark}
{	Note that \eqref{ov} and \eqref{rate} provide an upper bound of the overshoot and the rate of convergence, respectively. Thus, we can adjust the worst-case scenarios for these two performance indices by reducing the mentioned bounds. However, reducing the bounds does not necessarily affect the behavior of the system, especially if the bounds are conservative. Nevertheless, in the proof of Theorem  \eqref{prop4} and from the discussion above, note that the parameters from these indices are intimately related to the physical quantities of the system.}
	\end{remark}

	\subsection{Tuning guidelines for perturbed systems}\label{perturbed}
	The Lyapunov candidate \eqref{candidate} is conveniently chosen so that it reveals the effect of the control parameters on the rate of convergence and the maximum permissible overshoot of the closed-loop system. In this section, we further exploit this candidate selection by studying the effect of such parameters on the stability margin of the closed-loop system (see Definition \ref{smdef}). 	
	Hence, we consider the closed-loop system \eqref{disturbed} with the change of coordinates \eqref{change}, that is
	\begin{equation}\label{transformed2}
	\arraycolsep=1pt \def\arraystretch{1.6}
		\begin{array}{rcl}
			&\begin{bmatrix}
				\dot{\hat{q}}\\\dot{\hat{p}}
			\end{bmatrix}&=\begin{bmatrix}
				0_{n\times n} &\hat{A}(\hat{q})\\
				-\hat{A}^\top(\hat{q})&\hat{J}(\hat{x})-\hat{D}(\hat{x})
			\end{bmatrix}\begin{bmatrix}
				\nabla_{\hat{q}}\hat{H}(\hat{x})\\\nabla_{\hat{p}}\hat{H}(\hat{x})\end{bmatrix}+\hat{d}(t,\hat{q},\hat{p})\\[0.5cm]
		\end{array}
	\end{equation}
	$$\hat{H}(\hat{x})=\frac{1}{2}\hat{p}^\top\hat{p}+\hat{U}(\hat{q}),$$
	where $\hat{d}:\rea_{\geq0}\times\rea^n\times\rea^n\to\rea^{2n}$ is the time-dependent disturbance vector, satisfying $\norm{\hat{d}(t,\hat{q},\hat{p})}\leq\infty$. 
	Then, we state the following result.
	\begin{theorem}\label{prop5}
		The system \eqref{transformed2} is ISS with nonlinear stability margin
		\begin{equation}\label{sm}
			\rho(\norm{\hat{x}}):=g_r\norm{\hat{x}},
		\end{equation}
		where 
		\begin{equation}\label{gainmargin}
			g_r:=\frac{\mu\beta_{\max}\theta}{\varphi}
		\end{equation}
		is the gain margin of the closed-loop system \eqref{disturbed} with
		$0<\theta<1$, and some $\varphi, \mu, \beta_{\max} \in\rea_+$. 
	\end{theorem}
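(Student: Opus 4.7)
The plan is to reuse the Lyapunov candidate $S(\hat{x})$ constructed in the proof of Theorem \ref{prop4} and evaluate its derivative along trajectories of the perturbed system \eqref{transformed2}. The derivative splits as
\begin{equation*}
\dot{S}(\hat{x}) = \dot{S}_0(\hat{x}) + \nabla S(\hat{x})^\top \hat{d}(t,\hat{q},\hat{p}),
\end{equation*}
where $\dot{S}_0$ is the nominal derivative, already upper-bounded by $-\mu\beta_{\max}^2\norm{\hat{x}}^2$ via \eqref{cond2}. The task is then to absorb the perturbation term into a fraction of this dissipation whenever $\hat{d}$ is suitably small.

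The first step is to establish a linear bound $\norm{\nabla S(\hat{x})} \leq \varphi \norm{\hat{x}}$ for some $\varphi>0$. Computing $\nabla_{\hat{p}} S = \hat{p} + \epsilon \hat{A}(\hat{q})\nabla_{\hat{q}}\hat{U}(\hat{q})$ and the analogous $\nabla_{\hat{q}} S$, I would invoke condition \textbf{C1} together with $\nabla_{\hat{q}}\hat{U}(0_n)=0_n$ to get $\norm{\nabla_{\hat{q}}\hat{U}} \leq \beta_{\max}\norm{\hat{q}}$, and use condition \textbf{C2} and \eqref{boundM} to bound $\sigma_{\max}(\hat{A})$ and $\sigma_{\max}(\nabla_{\hat{q}}\hat{A})$ uniformly in $\hat{q}$. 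A Cauchy--Schwarz step on the $\epsilon$-cross term then produces the desired constant $\varphi$, expressed in terms of $\beta_{\max}$, $\sigma_{\max}(\hat{A})$, $\sigma_{\max}(\nabla_{\hat{q}}\hat{A})$, and $\epsilon$.

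Next, applying Cauchy--Schwarz to the perturbation term and introducing $\theta\in(0,1)$ yields
\begin{equation*}
\dot{S}(\hat{x}) \leq -(1-\theta)\mu\beta_{\max}^2\norm{\hat{x}}^2 - \theta\mu\beta_{\max}^2\norm{\hat{x}}^2 + \varphi\norm{\hat{x}}\,\norm{\hat{d}},
\end{equation*}
so that $\dot{S}(\hat{x}) \leq -(1-\theta)\mu\beta_{\max}^2\norm{\hat{x}}^2$ whenever $\varphi\norm{\hat{d}} \leq \theta\mu\beta_{\max}^2\norm{\hat{x}}$; after absorbing one factor of $\beta_{\max}$ into $\varphi$, this reads $\norm{\hat{d}} \leq g_r\norm{\hat{x}}$ with $g_r$ as in \eqref{gainmargin}. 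Combined with the quadratic sandwich \eqref{cond1}, this is precisely the Lyapunov-based ISS characterization of Theorem 4.19 in \cite{khalil2002nonlinear}, which delivers the class-$\mathcal{KL}$ estimate \eqref{smargin2} and, together with the linear disturbance bound \eqref{smargin}, the nonlinear stability margin $\rho(\norm{\hat{x}})=g_r\norm{\hat{x}}$ claimed in \eqref{sm}.

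The main obstacle is the first step: producing $\varphi$ in a form that is both explicit in the control parameters and valid \emph{globally} in $\hat{x}$. The gradient $\nabla_{\hat{q}} S$ inherits derivatives of $M$, $M_d$, and the Cholesky factor $\hat{T}_d$ through the cross term of $S$, and ensuring that these admit a uniform linear majorization in $\norm{\hat{x}}$ is exactly what forces the margin $\rho$ to be linear rather than a more general class-$\mathcal{K}_\infty$ function; the boundedness conditions \textbf{C1}--\textbf{C2} are invoked precisely to secure this step.
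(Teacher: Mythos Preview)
Your approach is correct and follows the same route as the paper: reuse the Lyapunov candidate \eqref{candidate}, keep the nominal bound $-\mu\beta_{\max}^2\norm{\hat x}^2$ from \eqref{cond2}, bound the disturbance contribution linearly in $\norm{\hat x}\,\norm{\hat d}$, split the dissipation with a factor $\theta\in(0,1)$, and invoke the ISS Lyapunov theorem together with \eqref{cond1}.

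The only organisational difference is how the disturbance term is packaged. You bound $\norm{\nabla S(\hat x)}\leq\varphi\norm{\hat x}$ directly, which forces you to confront derivatives of $\hat A$ coming from the cross term of $S$. The paper instead rewrites the disturbance contribution as $\hat d^\top\Upsilon_d(\hat q)\nabla\hat H$ with
\[
\Upsilon_d(\hat q)=\begin{bmatrix}I_n&\epsilon\nabla^2_{\hat q}\hat U(\hat q)\hat A(\hat q)\\ \epsilon\hat A^\top(\hat q)&I_n\end{bmatrix},
\]
and then sets $\varphi:=\sigma_{\max}(\Upsilon_d)$, using the already-established bound $\norm{\nabla\hat H}\leq\beta_{\max}\norm{\hat x}$. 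This buys an explicit, compact formula for $\varphi$ in terms of $\hat A$ and $\nabla^2_{\hat q}\hat U$ only, so the ``main obstacle'' you flag essentially disappears in the paper's bookkeeping; your version is slightly more conservative but also slightly more careful about the $\nabla_{\hat q}\hat A$ contribution.
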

	\begin{proof}
		Consider the Lyapunov candidate \eqref{candidate}, the bounds \eqref{cond1}, and
		\begin{equation*}
			\hat{d}(t,\hat{q},\hat{p}):=\begin{bmatrix}
				\hat{d}_1(t,\hat{q},\hat{p})\\\hat{d}_2(t,\hat{q},\hat{p})
			\end{bmatrix}
		\end{equation*}
		where $\hat{d}_1,\hat{d}_2:\rea_{\geq0}\times\rea^n\times\rea^n\to\rea^{n}$.
		Then, via some computations, it follows that 
		\begin{equation*}
			\begin{split}
				\dot{S}=& -\nabla^\top \hat{H} \Upsilon_{\tt{sym}}\nabla \hat{H}+\\
				&\epsilon(\hat{d}_2^\top\hat{A}^\top\nabla_{\hat{q}}\hat{U}+\hat{d}_1^\top \nabla_{\hat{q}}^2\hat{U} \hat{A}\hat{p})+\hat{d}_1^\top\nabla_{\hat{q}} \hat{U}+\hat{d}_2^\top\hat{p},\\
				=&-\nabla^\top \hat{H} \Upsilon_{\tt{sym}}\nabla \hat{H}+\hat{d}^\top \Upsilon_d \nabla\hat{H}
			\end{split}
		\end{equation*}
	with
	\begin{equation*}
		\Upsilon_d(\hat{q}):=\begin{bmatrix}
			I_n&\epsilon\nabla^2_{\hat{q}}\hat{U}(\hat{q})\hat{A}(\hat{q})\\
			\epsilon\hat{A}^\top(\hat{q})&I_n
		\end{bmatrix}
	\end{equation*}
	and the matrix $\Upsilon_{\tt{sym}}(\hat{x})$ is defined as in \eqref{upsilonsym2}.
		
	Then, by applying the Schur complement analysis, it follows that always exists a \textit{sufficiently small} $\epsilon\in\rea_+$ such that \eqref{upsilonsym2} is positive definite. Let $\mu\in\rea_+$ be the minimum eigenvalue of $\Upsilon_{\tt{sym}}(q,p)$ and consider the bounds \eqref{Hub}.  Then, it follows that
		\begin{equation*}
			\begin{array}{rl}
				\dot{S}\leq& -\mu \norm{\nabla \hat{H}}^2+\sigma_{\max}(\Upsilon_d)\norm{\nabla \hat{H}}\norm{\hat{d}}\\
				\leq&-\mu \beta_{\max}^2\norm{\hat{x}}^2+\beta_{\max}\sigma_{\max}(\Upsilon_d)\norm{\hat{x}}\norm{\hat{d}}.
			\end{array}
		\end{equation*}

		Consider $0<\theta<1$; then, by rewriting the previous expression, we get
		\begin{equation*}
			\begin{array}{rcl}
				\dot{S}&\leq& -\mu \beta_{\max}^2\norm{\hat{x}}^2(1-\theta)+\beta_{\max}\sigma_{\max}(\Upsilon_d)\norm{\hat{x}}\norm{\hat{d}}\\
				&&-\mu \beta_{\max}^2\theta\norm{\hat{x}}^2.
			\end{array}
		\end{equation*}
		Therefore,  
		\begin{equation}\label{sdot4}
			\begin{array}{r}
				\dot{S}\leq -\mu \beta_{\max}^2\norm{\hat{x}}^2(1-\theta),~\forall \norm{\hat{x}}\in\Omega,	\end{array}
		\end{equation}
		where 
		\begin{equation}\label{omegaset}
			\Omega:=\left\{\hat{x}\in\rea^{2n}~\Big|~\norm{\hat{d}}\leq \rho(\norm{\hat{x}})\right\};
		\end{equation} 
		 and $\rho(\norm{\hat{x}})$ is defined as in \eqref{sm} with $\varphi:=\sigma_{\max}(\Upsilon_d)$.
		
	 	From \eqref{cond1} and \eqref{sdot4},  the closed-loop system \eqref{transformed2} is ISS with nonlinear stability margin $\rho(\norm{\hat{x}})$ (see Theorem 4.19 from \cite{khalil2002nonlinear} and \cite{sontag1995characterizations}).
	\end{proof}
~\\
	\begin{remark}
		Note that via the nonlinear stability margin concept, we can define the gain margin \eqref{gainmargin} that corresponds to the maximum permissible growth of the norm of the disturbance with respect to the norm of the trajectories in which the closed-loop system remains ISS. In other words, we get a better disturbance attenuation by increasing the gain margin. 
	\end{remark}

	Similar to Section \ref{nonperturbed} we can provide a decay ratio bound and a maximum overshoot for the system output. Note that from \eqref{cond1} and \eqref{sdot4}, we get
	\begin{equation*}
		\dot{S}(\hat{x})\leq-\frac{2\mu\beta_{\max}(1-\theta)}{1+\epsilon\sigma_{\max}(\hat{A})\beta_{\max}}S(\hat{x}).
	\end{equation*}
	Then, via the comparison lemma (see \cite{khalil2002nonlinear}), we have that the solution of \eqref{transformed2} is bounded, i.e.,
	\begin{equation*}
		\norm{\hat{x}}\leq\sqrt{\frac{k_2}{k_1}}\norm{\hat{x}_0}\exp\left\{-\frac{\mu\beta_{\max}(1-\theta)}{1+\epsilon\sigma_{\max}(\hat{A})\beta_{\max}}t\right\},
	\end{equation*}
	on $t\in(t_0,T]$ for some $T>0$. 
	
	Thus, we have proven the following result
	\begin{corollary}\label{cor2}
		Let 
		$$\Omega_e:={\{\hat{x}\in\rea^{2n}~|~\norm{\hat{x}}=\frac{1}{g_r}\norm{\hat{d}}\}}.$$ 
		Then, for some initial conditions $\norm{\hat{x}_0}\in \Omega$ (see \eqref{omegaset}), the trajectories approach exponentially to $\Omega_e$ at a rate of convergence that is upper bounded by
		\begin{equation}\label{rate2}
		\frac{\mu\beta_{\max}(1-\theta)}{1+\epsilon\sigma_{\max}(\hat{A})\beta_{\max}}
		\end{equation}
		as $t\to T$ for some $T>0$. 
		
		Furthermore, the maximum overshoot of the output of the system on $(t_0,T]$ is given by \eqref{ov}.
		\hfill $\blacksquare$
		\\
	\end{corollary}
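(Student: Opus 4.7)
The plan is to piggy-back directly on the dissipation inequality \eqref{sdot4} established in the proof of Theorem \ref{prop5}, which gives $\dot{S}(\hat{x}) \leq -\mu\beta_{\max}^2(1-\theta)\norm{\hat{x}}^2$ for every $\hat{x}$ in the set $\Omega$ defined in \eqref{omegaset}. Using the quadratic upper bound $S(\hat{x}) \leq k_2\norm{\hat{x}}^2$ from \eqref{cond1} and the explicit form of $k_2$ in \eqref{k1k2}, I would substitute $\norm{\hat{x}}^2 \geq S/k_2$ to convert the above into a linear differential inequality in $S$ of the form
$$
\dot{S}(\hat{x}) \leq -\frac{2\mu\beta_{\max}(1-\theta)}{1+\epsilon\sigma_{\max}(\hat{A})\beta_{\max}}\,S(\hat{x}),
$$
valid as long as the trajectory remains inside $\Omega$.

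Next, I would define $T$ as the first exit time of $\hat{x}(t)$ from $\Omega$, equivalently the first hitting time of $\Omega_e$, where $\rho(\norm{\hat{x}}) = \norm{\hat{d}}$ and the dissipation inequality saturates. For initial conditions $\hat{x}_0 \in \Omega$, the fact that $S$ is strictly decreasing on $(t_0,T]$ guarantees the trajectory stays in $\Omega$ on that interval, so the comparison lemma (see \cite{khalil2002nonlinear}) applies and yields $S(\hat{x}(t)) \leq S(\hat{x}_0)\exp\{-\alpha(t-t_0)\}$ with $\alpha := \frac{2\mu\beta_{\max}(1-\theta)}{1+\epsilon\sigma_{\max}(\hat{A})\beta_{\max}}$. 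Invoking the lower bound $k_1\norm{\hat{x}}^2 \leq S(\hat{x})$ and taking square roots then gives
$$
\norm{\hat{x}} \leq \sqrt{\tfrac{k_2}{k_1}}\,\norm{\hat{x}_0}\exp\!\left\{-\tfrac{\mu\beta_{\max}(1-\theta)}{1+\epsilon\sigma_{\max}(\hat{A})\beta_{\max}}(t-t_0)\right\},
$$
exhibiting the claimed rate of convergence \eqref{rate2} towards $\Omega_e$.

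For the overshoot statement, I would simply reuse the chain from the proof of Corollary \ref{cor1}: the transformed output satisfies $\norm{\hat{y}} \leq \sigma_{\max}(G^\top\hat{T}_d(\hat{q}))\norm{\hat{p}} \leq \sigma_{\max}(G^\top\hat{T}_d(\hat{q}))\norm{\hat{x}}$. Substituting the exponential envelope on $\norm{\hat{x}}$ obtained above, one recovers the same prefactor $\xi$ of \eqref{ov} on the restricted interval $(t_0,T]$, which is precisely what the corollary asserts.

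The main obstacle is not algebraic but conceptual: the dissipation inequality \eqref{sdot4} holds only inside $\Omega$, so one must be careful to localise the comparison-lemma argument to $(t_0,T]$ and to justify that strict monotonicity of $S$ keeps the trajectory inside $\Omega$ up to the hitting time of $\Omega_e$. Once this localisation is made explicit, the rest of the proof is a mechanical chain of the bounds already assembled in Sections \ref{nonperturbed}--\ref{perturbed}.
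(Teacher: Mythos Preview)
Your proposal is correct and follows essentially the same route as the paper: combine the dissipation inequality \eqref{sdot4} with the quadratic bounds \eqref{cond1} on $S$ to obtain a linear differential inequality in $S$, apply the comparison lemma on $(t_0,T]$ to extract the exponential envelope on $\norm{\hat{x}}$, and then reuse the output chain from Corollary \ref{cor1} to recover the overshoot bound \eqref{ov}. If anything, you are more explicit than the paper about the role of $T$ and the need to localise the comparison-lemma argument to $\Omega$; the paper simply asserts the bound ``on $t\in(t_0,T]$ for some $T>0$'' without further elaboration.
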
	
	By using simultaneously the expressions \eqref{gainmargin}, \eqref{rate2}, and \eqref{ov};  we provide an insight into the relationship of the control parameters with three performance metrics~--~i.e., the gain margin, the upper bound of the rate of convergence, and the maximum overshoot, respectively~--~of the perturbed system. \eqref{transformed2}. For instance, a trade-off between these metrics is evident by increasing $\beta_{\max}$, the disturbance $\hat{d}$ is attenuated, and the stability margin increases. Simultaneously,  the maximum overshoot is augmented.

	\begin{remark}
		Another approach for tuning the stability margin of general nonlinear systems can be found in \cite{grune2002input}, where the authors introduce an equivalent concept to ISS, namely, \textit{input-to-state dynamical stability (ISDS)}. However, this tuning methodology lacks physical intuition as there is no clear relation between the parameters associated with the stability margin with energy or damping.
	\end{remark}
		\begin{remark}
		By using the expression \eqref{sdot4}, we can calculate the $\mathcal{L}_2$-norm for the signal $\hat{x}$, i.e.,
		\begin{equation}\label{l2x}
			\arraycolsep=1pt \def\arraystretch{1.6}
			\begin{array}{rcl}
				\norm{\hat{x}}^2_2&=&\int^\infty_0\norm{\hat{x}}^2 d\tau\\
				&\leq&-\frac{1}{\mu\beta_{\max}^2(1-\theta)} \int^\infty_0\dot{S}(\tau) d\tau\\
				&\leq& \frac{1}{\mu\beta_{\max}^2(1-\theta)}(S(0)-S(\infty)) \\
				&\leq& \frac{1}{\mu\beta_{\max}^2(1-\theta)}S(0).
			\end{array}
		\end{equation}

		Additionally, consider \eqref{l2x} and since $\norm{\hat{d}}_2\leq g_r\norm{\hat{x}}_2$, we have that
		\begin{equation}\label{l2d2}
			\arraycolsep=1pt \def\arraystretch{1.6}
			\begin{array}{rcl}
				\norm{\hat{d}}_2^2&\leq& g_r^2 \int^\infty_0\norm{\hat{x}}^2 d\tau=g_r^2\norm{\hat{x}}^2_2\\
				&\leq&\frac{\mu\theta^2}{\sigma_{\max}(\Upsilon_d)^2(1-\theta)} S(0).
			\end{array}
		\end{equation}
		Recall that the square of the $\mathcal{L}_2$-norm of a signal corresponds to the energy contained in such signal. Therefore, \eqref{l2d2} provides the upper bound of the energy of the disturbance  in which the system remains stable. We remark the effect of the control parameters on such bound. Moreover, \eqref{l2d2} can be rewritten as
		$$\dfrac{\norm{\hat{d}}_2^2}{\norm{\hat{x}}_2^2}\leq{g_r^2};$$ 
		thus, we can see clearly from the previous expression that the maximum permissible growth of the energy of disturbance with respect to the energy of the trajectories~~--~~in which the system remains stable~~--~~is given by the square of the gain margin $g_r$.
	\end{remark}
	~\\
	%%%%%%%%%%%%%%%%%%%%%%%%%%%%%%%%%%%%%%%%%%%%%%%%%%%%%%%%%%%%%%%%%%%%%%%%%%%%%%%%%%%%%%%%%%%%%%%%%%%%	
	\section{On the behavior of the closed-loop system near the equilibrium}\label{trtuning}
	In this section, we propose tuning rules to prescribe a desired performance in the transient response. To this end, we recur to the linearization of the closed-loop system \eqref{tg1}-\eqref{tg2}; and later, we find a transformation such that the linearized system has a \textit{saddle point matrix} structure (we refer the reader to \cite{benzi2005numerical,benzi2006eigenvalues} for further details of this class of matrices). This particular structure reveals interesting spectral properties for the linearized matrix, of which we deduce our tuning guidelines.  
	
	Let us first introduce the vectors $\tilde{q}:=q-q_\star$ and $\tilde{p}:=p-p_\star$ with $p_\star:=0_n$.  Then, it follows that the linearized system around the equilibrium point $(q_\star,0_n)$ corresponds to
	\begin{equation*}
		\begin{array}{rc}
			\begin{bmatrix}
				\dot{\tilde{q}}\\\dot{\tilde{p}}
			\end{bmatrix}=(J_{d\star}-R_{d\star})\nabla^2 H_{d\star}	\begin{bmatrix}
				\tilde{q}\\\tilde{p}
			\end{bmatrix}
		\end{array}.
	\end{equation*}
	
	Subsequently, consider the following Cholesky decompositions (see \cite{horn2012matrix})
	\begin{equation*}
		\inv{M}_{d\star}=\phi_M^\top\phi_M, \nabla_q^2U_{d\star}=\phi_P^\top\phi_P
	\end{equation*}
	where $\phi_M,\phi_P\in\rea^{n\times n}$ are upper triangular  matrices; then, consider the similarity transformation matrix $T\in\rea^{2n\times 2n}$ and new coordinates $z\in\rea^{2n}$
	\begin{equation*}
		T:=\begin{bmatrix}
			0_{n\times n}&\phi_M\\\phi_P&0_{n\times n}
		\end{bmatrix},~z=T\begin{bmatrix}
			\tilde{q}\\\tilde{p}
		\end{bmatrix}.
	\end{equation*}
	Thus, the linearized system in the newly introduced coordinates $z$ becomes
	\begin{equation}\label{linearized}
		\begin{array}{rll}
			\dot{z}&=&-\mathcal{A}z,\\
			\mathcal{A}&:=&\begin{bmatrix}
				\phi_M(D_{d\star}-J_{2\star})\phi_M^\top&\tinv{\phi_M}\inv{M}_\star\phi_P^\top\\
				-\phi_P\inv{M}_\star\phi_\inv{M}&0_{n\times n}
			\end{bmatrix}.
		\end{array}
	\end{equation}
	
	Then, inspired by the results of Brayton and Moser \cite{brayton1964theory}, 
 	we provide a proposition on the location of the spectrum of $\mathcal{A}$ from \eqref{linearized}. \footnote{The authors in \cite{brayton1964theory} consider $J_{2\star}=0_{n\times n}$.}
	
	\begin{theorem}\label{prop7}
		Let $\lambda\in\cmp$ be an eigenvalue of $\mathcal{A}$ and  $col(v,w)$ its corresponding eigenvector with $v,w\in\mathbb{C}^n$. Then, $\lambda$ lies on a circle centered in the point $(p_r,p_i)$ of the complex plane where
		\begin{equation}\label{pr}
				p_r:=\frac{v^*(\phi_M D_{d\star}\phi_M^\top)v}{\norm{v}^2},~
				p_i:=i\frac{v^*(\phi_M J_{2\star}\phi_M^\top)v}{\norm{v}^2}.
		\end{equation}
		Moreover, the radius of such circle is defined as
		\begin{equation}\label{radius}
			r_c:=\sqrt{\frac{\norm{(\tinv{\phi_M}\inv{M_\star}\phi_P^\top)w}^2-\norm{\Psi v}^2}{\norm{v}^2}+p_r^2+p_i^2}
		\end{equation}
		where $\Psi:=\phi_M(D_{d\star}-J_{2\star})\phi_M^\top$.
	\end{theorem}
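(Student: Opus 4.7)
The plan is to derive a circle equation directly from the eigenvalue relation for $\mathcal{A}$ by taking a suitable norm and exploiting the symmetry/skew-symmetry of $D_{d\star}$ and $J_{2\star}$.

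First I would write out the block form of the eigenvalue equation $\mathcal{A}\,\mathrm{col}(v,w)=\lambda\,\mathrm{col}(v,w)$. The top block row gives
\begin{equation*}
\Psi v + \tinv{\phi_M}\inv{M}_\star\phi_P^\top w \;=\; \lambda v,
\end{equation*}
which I would rearrange as $(\lambda v-\Psi v)=\tinv{\phi_M}\inv{M}_\star\phi_P^\top w$ and then take the squared Euclidean norm on both sides. Expanding the left-hand side produces
\begin{equation*}
|\lambda|^2\norm{v}^2 - \lambda\,(\Psi v)^* v - \bar\lambda\, v^* \Psi v + \norm{\Psi v}^2 = \norm{\tinv{\phi_M}\inv{M}_\star\phi_P^\top w}^2.
\end{equation*}

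Next I would unpack the cross term using the decomposition $\Psi=\phi_M D_{d\star}\phi_M^\top-\phi_M J_{2\star}\phi_M^\top$. Because $D_{d\star}$ is symmetric, the Hermitian form $v^*\phi_M D_{d\star}\phi_M^\top v$ is real, equal to $p_r\norm{v}^2$ by definition. Because $J_{2\star}$ is skew-symmetric, $v^*\phi_M J_{2\star}\phi_M^\top v$ is purely imaginary, and the factor $i$ in the definition of $p_i$ in \eqref{pr} is precisely what turns it into a real number: one obtains $v^*\Psi v=(p_r+ip_i)\norm{v}^2$. Consequently
\begin{equation*}
\lambda\,(\Psi v)^* v + \bar\lambda\, v^* \Psi v \;=\; 2\,\mathrm{Re}\bigl(\bar\lambda\, v^*\Psi v\bigr) \;=\; 2\norm{v}^2\bigl(\mathrm{Re}(\lambda)\,p_r+\mathrm{Im}(\lambda)\,p_i\bigr).
\end{equation*}

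Finally, dividing the expanded identity by $\norm{v}^2$ and completing the square by adding $p_r^2+p_i^2$ to both sides yields
\begin{equation*}
(\mathrm{Re}(\lambda)-p_r)^2+(\mathrm{Im}(\lambda)-p_i)^2 \;=\; \frac{\norm{\tinv{\phi_M}\inv{M}_\star\phi_P^\top w}^2-\norm{\Psi v}^2}{\norm{v}^2}+p_r^2+p_i^2,
\end{equation*}
whose right-hand side is exactly $r_c^2$ from \eqref{radius}. This places $\lambda$ on the claimed circle.

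The main obstacle I anticipate is not algebraic but interpretive: one has to recognise that the $J_{2\star}$ contribution is purely imaginary so that, once the factor $i$ in the definition of $p_i$ is accounted for, the combination $v^*\Psi v$ behaves like the complex number $p_r+ip_i$ times $\norm{v}^2$. Once that observation is made, the cross term becomes a standard dot product of $(\mathrm{Re}\lambda,\mathrm{Im}\lambda)$ with the centre $(p_r,p_i)$, and the circle identity falls out by completing the square. I would not need to assume $\lambda\neq 0$ anywhere; the derivation uses only the top block of the eigenvalue equation and the basic symmetry properties of $D_{d\star}$ and $J_{2\star}$.
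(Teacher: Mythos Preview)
Your proposal is correct and follows essentially the same route as the paper: take the top block of the eigenvalue equation, square the norm of $(\lambda I_n-\Psi)v$, exploit the symmetry of $D_{d\star}$ and skew-symmetry of $J_{2\star}$ to identify the cross term with $2\norm{v}^2(p_r\Re\lambda+p_i\Im\lambda)$, and complete the square. Your explicit observation that $v^*\Psi v=(p_r+ip_i)\norm{v}^2$ is a clean way to package what the paper writes via $\Psi+\Psi^\top$ and $\Psi-\Psi^\top$; the only tacit point (also left implicit in the paper) is that $v\neq 0$, which follows from the bottom block row since $\phi_P\inv{M}_\star\inv{\phi_M}$ is invertible.
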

	\begin{proof}
		Consider the eigenvalue problem $\mathcal{A}\begin{bmatrix}v\\w\end{bmatrix}=\lambda\begin{bmatrix}v\\w\end{bmatrix}$. Then, it follows that
		\begin{equation}\label{proof4}
			\begin{split}
				&\Psi v+ (\tinv{\phi_M}\inv{M}_\star\phi_P^\top) w=\lambda v\\
				\implies& \norm{(\Psi-\lambda I_n)v}^2 = \norm{(\tinv{\phi_M}\inv{M}_\star\phi_P^\top) w}^2\\
				\implies& |\lambda|^2-\frac{v^*(\Psi^\top+\Psi)v}{\norm{v}^2}\Re(\lambda)-\frac{iv^*(\Psi-\Psi^\top)v}{\norm{v}^2}\Im(\lambda)\\
				&=\frac{\norm{(\tinv{\phi_M}\inv{M}_\star\phi_P^\top) w}^2-\norm{\Psi v}^2}{\norm{v}^2}\\
				\implies&\Re (\lambda)^2+\Im (\lambda)^2-2\frac{v^*(\phi_M D_{d\star}\phi_M^\top)v}{\norm{v}^2}\Re(\lambda)\\
				&-2i\frac{v^*(\phi_M J_{2\star}\phi_M^\top)v}{\norm{v}^2}\Im(\lambda)\\
				&=\frac{\norm{(\tinv{\phi_M}\inv{M}_\star\phi_P^\top) w}^2-\norm{\Psi v}^2}{\norm{v}^2}.
			\end{split}
		\end{equation}
		Subsequently, by completing the squares in \eqref{proof4} with the expressions defined in \eqref{pr}, we get the circle centered in $(p_r, p_i)$ with radius $r_c$ as defined in \eqref{radius} in the complex plane.\footnote{Note that $\frac{iv^*(\Psi^\top-\Psi)v}{\norm{v}^2}\in \rea$.}
	\end{proof}
	
	\begin{remark}
		A similar analysis for the location of the eigevanlues can be performed by using the Gershgorin circle theorem or Gershgorin-like theorems (see \cite{horn2012matrix, johnson1989gersgorin}), where the radius of the circle (resp. the center of the circle) from these theorems is characterized by the sum of the non-diagonal elements (resp. by the diagonal element). Whereas Theorem \ref{prop7} characterizes the radius and the center of the circle by using the norms of the sub-blocks of $\mathcal{A}$. 
		\\
	\end{remark}
	Theorem \ref{prop7} claims that each eigenvector of $\mathcal{A}$ defines a circle where its corresponding eigenvalue lies on. {This theorem provides a quick and intuitive visual way of studying the effect of the parameters on the spectrum of $\mathcal{A}$. Moreover, it aids in designing the controller gains without computing the eigenvectors, which may be cumbersome (especially when the matrix $\mathcal{A}$ is large).} In Fig. \ref{ceig}, we provide a visual example\footnote{Since the term $\frac{v^\star(\phi_M D_{d\star}\phi_M^\top)v}{\norm{v}^2}$ is positive definite, every circle of $-\mathcal{A}$ lies in the left-half plane.} for particular choices of $J_2(q,p)$.
	\begin{figure}[t]
		\centering
		\includegraphics[width=\columnwidth]{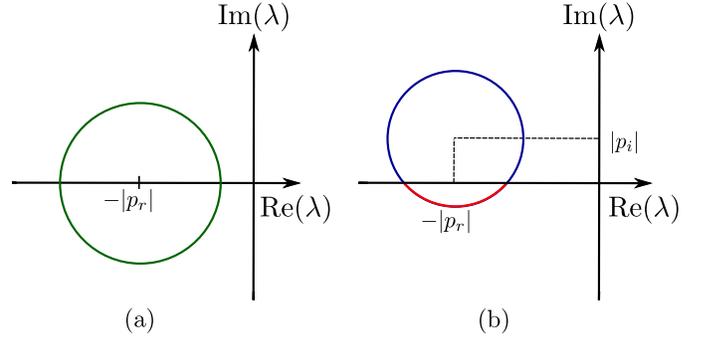}
		\caption{Circle containing an eigenvalue from of $-\mathcal{A}$. (a): ${J_{2\star}=0_{n\times n}}$. (b): $J_{2\star}\neq 0_{n\times n}$.}\label{ceig}
	\end{figure}	
	Note that the inclusion $J_2(q,p)$ leads to a more unpredictable oscillatory behavior of the closed-loop system as the circle \eqref{proof4} loses its symmetry on the real axis. Therefore, the oscillations may be increased in some coordinates and they can be reduced in others. For example, if the particular eigenvalue is located in the red arc, the oscillations may be reduced. Conversely, if that eigenvalue is located in the blue arc, the oscillations may be augmented.
	
	In general, the presence of $J_2(q,p)$ introduces a non-symmetric component to the block $(1,1)$ from $\mathcal{A}$. Therefore, in the sequel, we consider two scenarios for \eqref{linearized}: i) $\mathcal{A}$ with symmetric block $(1,1)$, and ii) $\mathcal{A}$ with non-symmetric block $(1,1)$.

	\subsection{A particular case: $\mathcal{A}$ with symmetric block $(1,1)$}\label{trtuning1}
	We recover \textit{a class of saddle point matrices} whose block (1,1) is symmetric for some particular choices of $J_2(q,p)$; for simplicity of exposition and without loss of generality, in the current section we consider $J_2(q,p)=0_{n\times n}$.\footnote{See \cite{benzi2006eigenvalues} for a complete description of the spectrum of $\mathcal{A}$ when ${J_2(q,p)=0_{n\times n}}$.} The spectrum can be characterized by the norm of each of its sub-block matrices; thus, it simplifies the analysis (especially when $\mathcal{A}$ is large). Moreover, the linearization of closed-loop system with PID-PBC (see \cite{borja2020}) recovers this particular class of saddle point matrices, which is later exploited in \cite{chan2020tuning} to provide tuning guidelines. Interestingly, the pioneering results of Brayton and Moser \cite{brayton1964theory} also use this form to study stability properties of electrical networks near their equilibrium point. 
	
	Here, we extend the methodologies described in \cite{chan2020tuning} for characterizing the spectrum to mechanical systems stabilizable by a broader class of PBC approaches that obtain the target dynamics as described in \eqref{tg1}-\eqref{tg2}.

	The following proposition provides a tuning rule that reduces the transient response oscillations to the minimum.
	
	\begin{theorem}\label{prop8}
		Consider the system \eqref{linearized}. If 
		\begin{equation}\label{p5}
			\arraycolsep=1pt \def\arraystretch{1.6}
			\begin{array}{rl}
				&\minlamb{D_{d\star}}^2\geq\\ 
				&4\maxlamb{ M_{d\star}\inv{M}_{\star}\nabla_{q}^2U_{d\star}\inv{M}_\star M_{d\star}}\maxlamb{M_{d\star}} 
			\end{array}
		\end{equation}
		holds, then the spectrum of $\mathcal{A}$ is real and positive. 
	\end{theorem}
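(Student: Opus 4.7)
My plan is to exploit the saddle-point structure of $\mathcal{A}$ that arises once $J_{2\star}=0_{n\times n}$, reducing the spectral problem to a scalar quadratic in $\lambda$ parametrized by the upper block of the eigenvector. Writing
\begin{equation*}
	\mathcal{A}=\begin{bmatrix}W&B\\-B^\top&0_{n\times n}\end{bmatrix}
\end{equation*}
with $W:=\phi_M D_{d\star}\phi_M^\top\succ 0$ symmetric (by \textbf{C3}) and $B:=\tinv{\phi_M}\inv{M}_\star\phi_P^\top$ invertible (since $\phi_M$, $\phi_P$, and $M_\star$ are all invertible), I would first rule out $\lambda=0$, which is immediate from invertibility of $B$, and then use the second block equation to write $w=-\lambda^{-1}B^\top v$. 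Substituting into the first block yields the matrix quadratic $\lambda^{2}v-\lambda W v+BB^\top v=0$. Left-multiplying by $v^{*}$ gives the scalar equation $\lambda^{2}-a\lambda+b=0$ with $a:=v^{*}Wv/\norm{v}^{2}$ and $b:=v^{*}BB^\top v/\norm{v}^{2}$, both strictly positive (for $b$ this uses \textbf{C1}, since $BB^\top$ inherits positive definiteness from $\nabla^{2}_{q}U_{d\star}\succ 0$ through $\phi_P$). Hence $\lambda\in\rea_+$ iff $a^{2}\geq 4b$, and once reality is established, positivity of both roots follows from Vieta's formulas.

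The key step is a well-chosen change of variable $v=\tinv{\phi_M}u$, designed so that the identity $\phi_M^\top\phi_M=\inv{M}_{d\star}$ collapses the Cholesky factors into exactly the quantities appearing in \eqref{p5}. A short computation yields $v^{*}Wv=u^\top D_{d\star}u$, $\norm{v}^{2}=u^\top M_{d\star}u$, and $v^{*}BB^\top v=u^\top M_{d\star}\inv{M}_\star\nabla^{2}_{q}U_{d\star}\inv{M}_\star M_{d\star}u$, reducing the discriminant condition to
\begin{equation*}
	(u^\top D_{d\star}u)^{2}\geq 4\,(u^\top M_{d\star}u)\bigl(u^\top M_{d\star}\inv{M}_\star\nabla^{2}_{q}U_{d\star}\inv{M}_\star M_{d\star}u\bigr)
\end{equation*}
for all nonzero $u$. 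Applying the Rayleigh--Ritz lower bound $u^\top D_{d\star}u\geq\minlamb{D_{d\star}}\norm{u}^{2}$ on the left, together with the matching upper bounds $u^\top M_{d\star}u\leq\maxlamb{M_{d\star}}\norm{u}^{2}$ and the analogous one for the sandwiched matrix on the right, produces directly the sufficient condition \eqref{p5}.

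The main obstacle, and the reason the argument is not merely routine, is identifying that particular change of variable. A naive Rayleigh bound on $W$ and $BB^\top$ in the original coordinates produces a condition involving $\maxlamb{\inv{M}_\star\nabla^{2}_{q}U_{d\star}\inv{M}_\star}$ together with an extra factor of $\maxlamb{M_{d\star}}^{2}$, which is strictly weaker than and structurally different from \eqref{p5}. The substitution $v=\tinv{\phi_M}u$ is the one that symmetrically absorbs the Cholesky factors of $\inv{M}_{d\star}$ on both sides, producing exactly the sandwiched matrix $M_{d\star}\inv{M}_\star\nabla^{2}_{q}U_{d\star}\inv{M}_\star M_{d\star}$ that appears in the hypothesis while isolating a single factor of $\maxlamb{M_{d\star}}$; once this is observed, the remainder is bookkeeping with the Rayleigh quotient.
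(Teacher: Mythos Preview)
Your proposal is correct and follows essentially the same route as the paper: both reduce the eigenvalue problem to the scalar quadratic $\lambda^{2}-a\lambda+b=0$ via the block structure, perform the identical change of variable $u=\phi_M^{\top}v$ (your $v=\tinv{\phi_M}u$) to collapse the Cholesky factors, and then apply Rayleigh--Ritz bounds to obtain \eqref{p5}. The only cosmetic differences are that you explicitly rule out $\lambda=0$ and invoke Vieta for positivity, and that you should write $u^{*}$ rather than $u^{\top}$ since $v$ may be complex a priori (the quadratic forms are real regardless, so nothing breaks).
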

	
	\begin{proof}
		Consider the eigenvalue problem $\mathcal{A}\begin{bmatrix}v\\w\end{bmatrix}=\lambda\begin{bmatrix}v\\w\end{bmatrix}$ with $\lambda\in\cmp$ and $v,w\in\cmp^{n\times n}$. Some computations yield the quadratic equation
		\begin{equation*}
			\def\arraystretch{1.6}
			\begin{array}{ll}
				\lambda^2-\frac{v^*(\phi_MD_{d\star}\phi_M^\top)v}{\norm{v}^2}\lambda
				+\frac{v^*(\tinv{\phi_M}\inv{M_\star}\nabla_q^2U_{d\star}\inv{M}\inv{\phi_M})v}{\norm{v}^2}=0;
			\end{array}
		\end{equation*} 
		whose solution is given by
		\begin{equation*}
			\arraycolsep=1pt \def\arraystretch{1.6}
			\begin{array}{ll}
				\lambda&=\frac{1}{2}\frac{v^*(\phi_MD_{d\star}\phi_M^\top)v}{\norm{v}^2}\pm\\
				&\frac{1}{2}\sqrt{\left(\frac{v^*(\phi_MD_{d\star}\phi_M^\top)v}{\norm{v}^2}\right)^2-4\frac{v^*(\tinv{\phi_M}\inv{M_\star}\nabla_q^2U_{d\star}\inv{M}\inv{\phi_M})v}{\norm{v}^2}}
			\end{array}.
		\end{equation*}
		Note that $\lambda$ is real if and only if the discriminant of the previous solution is non-negative, that is
		\begin{equation}\label{sol}
			\left(\frac{v^*(\phi_MD_{d\star}\phi_M^\top)v}{\norm{v}^2}\right)^2\geq 4\frac{v^*(\tinv{\phi_M}\inv{M_\star}\nabla_q^2U_{d\star}\inv{M}\inv{\phi_M})v}{\norm{v}^2}
		\end{equation}
		holds. In order to verify \eqref{sol} for the entire spectrum of $\mathcal{A}$, consider the change of coordinates $v_1= \phi_M^\top v$, then it follows that
		\begin{equation*}
				\frac{(v^*_1 D_{d\star}v_1)^2}{(v^*_1 M_{d\star} v_1)^2}\geq 4 \frac{(v^*_1 M_{d\star}\inv{M}_{\star}\nabla_{q}^2U_{d\star}\inv{M}_\star M_{d\star}v_1)}{(v^*_1 M_{d\star} v_1)},
		\end{equation*}
		and by multiplying both sides with $\frac{1}{\norm{v_1}^4}$, it follows that
		\begin{equation}\label{discri}
			\frac{(v^*_1 D_{d\star}v_1)^2}{\norm{v_1}^4}\geq 4 \frac{(v^*_1 M_{d\star}\inv{M}_{\star}\nabla_{q}^2U_{d\star}\inv{M}_\star M_{d\star}v_1)(v^*_1 M_{d\star} v_1)}{\norm{v_1}^4}.
		\end{equation}
		Then, note that we have the following inequalities
		\begin{equation*}
			\frac{(v^*_1 D_{d\star}v_1)^2}{\norm{v_1}^4}\geq \minlamb{D_{d\star}}^2
		\end{equation*}
		and
		\begin{equation*}
			\frac{(v^*_1 M_{d\star}\inv{M}_{\star}\nabla_{q}^2U_{d\star}\inv{M}_\star M_{d\star}v_1)(v^*_1 	M_{d\star} v_1)}{\norm{v_1}^4}\\
		\end{equation*}
		\begin{align*}
			&\leq \maxlamb{ M_{d\star}\inv{M}_{\star}\nabla_{q}^2U_{d\star}\inv{M}_\star M_{d\star}}\maxlamb{M_{d\star}} 
		\end{align*}
		Therefore, it follows that if $\eqref{p5}$ holds; then, \eqref{discri} holds for every $\lambda$. Additionally, note that $\lambda\in\rea_+$ since $D_d\succ0$; hence, the spectrum of $\mathcal{A}$ is positive.
	\end{proof}
	\begin{remark}\label{r10}
		Another convenient inequality for implementation purposes stemming from \eqref{discri} is given by
		\begin{equation*}
			\minlamb{D_{d\star}}^2\geq  4\frac{\maxlamb{M_{d\star}}^3\maxlamb{\nabla_{q}^2U_{d\star}}}{\minlamb{M_\star}^2}.
		\end{equation*}
		However, the obtained values may be conservative.
		\\
	\end{remark}
	
	%Theorem \ref{prop8} employs the tuple $(\minlamb{D_{d\star}},\minlamb{M_{\star}} \maxlamb{M_{d\star}},\maxlamb{\nabla_{q}^2U_{d\star}})$ to ensure that the spectrum of $\mathcal{A}$ is real, i.e., the transient response of \eqref{linearized} contains no oscillations. 
	By employing Theorem \ref{prop8} as a tuning rule, we can ensure that the spectrum of $\mathcal{A}$ is real and positive; hence, the linearized system \eqref{linearized} does not exhibit oscillations in its transient response.
	 
	Note that there is another set of assignable variables, that is, $\maxlamb{D_{d\star}},\minlamb{M_{d\star}}$, and $\minlamb{\nabla_{q}^2 U_{d\star}}$. In the following proposition, we exploit these variables for assigning a ``worst'' (or upper bound) value for the rise time of systems that verify \eqref{p5}. To this end, we define the rise time as the time taken by the system to reach $98\%$ of its desired value.
	\begin{theorem}\label{thm_risetime}
	 	{Assume that \eqref{p5} holds. Denote with ${\lambda_{tr}\in\rea_+}$ the lower bound of the spectrum of $\mathcal{A}$ given by
	 	\begin{equation}\label{ltr}
	 			\lambda_{tr}:=\begin{cases}
	 				\min\left\{\minlamb{\phi_MD_{d\star}\phi_M^\top},\frac{2\delta}{1+\sqrt{\Delta}}\right\},~\text{if}~\Delta\geq0\\
	 				\minlamb{\phi_MD_{d\star}\phi_M^\top},~\text{otherwise}
	 			\end{cases}	
	 		\end{equation}
	 		where $$\delta:=\minlamb{\phi_P\inv{M}_\star M_{d\star}\inv{D_{d\star}}M_{d\star}\inv{M}_\star\phi_P^\top},$$
	 		and
	 		$$\Delta:=1-4\dfrac{\delta}{\maxlamb{\phi_MD_{d\star}\phi_M^\top}}.$$
	 	Then, the rise time of \eqref{linearized} with $J_{2\star}=0_{n\times n}$ is bounded by }
	 	\begin{equation}\label{rise}
	 		\tilde{t}_{rt}:=\frac{4}{\lambda_{tr}}.
	 	\end{equation}
	 \end{theorem}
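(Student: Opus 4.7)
The plan is to leverage Theorem \ref{prop8}, which under \eqref{p5} guarantees that the spectrum of $\mathcal{A}$ is real and positive. Writing $z(t)=\exp(-\mathcal{A}t)z(0)$ and decomposing into real decaying modes, the slowest mode decays at rate $\minlamb{\mathcal{A}}$; applying the standard $98\%$ criterion $e^{-\minlamb{\mathcal{A}}t_{rt}}\approx 0.02$ gives $t_{rt}\approx 4/\minlamb{\mathcal{A}}$. Hence the task reduces to establishing $\minlamb{\mathcal{A}}\geq\lambda_{tr}$ and substituting into this relation.

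To lower-bound $\minlamb{\mathcal{A}}$ I would re-use the quadratic characterization derived in the proof of Theorem \ref{prop8}: any eigenpair $(\lambda,col(v,w))$ of $\mathcal{A}$ with $\lambda\neq 0$ satisfies $\lambda^2-a_v\lambda+b_v=0$, where $a_v:=v^{*}Av/\norm{v}^2$ and $b_v:=v^{*}B^\top Bv/\norm{v}^2$ with $A:=\phi_M D_{d\star}\phi_M^\top$ and $B:=\phi_P\inv{M}_\star\inv{\phi_M}$. The key observation is that the Schur complement of $A$ in $\mathcal{A}$ evaluates to
\begin{equation*}
BA^{-1}B^\top=\phi_P\inv{M}_\star M_{d\star}\inv{D_{d\star}}M_{d\star}\inv{M}_\star\phi_P^\top,
\end{equation*}
whose minimum eigenvalue is exactly $\delta$. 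The generalized Rayleigh-quotient identity $\inf_v(v^{*}B^\top B v)/(v^{*}Av)=\minlamb{A^{-1/2}B^\top BA^{-1/2}}=\minlamb{BA^{-1}B^\top}=\delta$ then yields the essential inequality $b_v\geq\delta a_v$ for every $v$. Substituting this into the smaller root $\lambda_-(v)=2b_v/(a_v+\sqrt{a_v^2-4b_v})$, which on the admissible region $a_v^2\geq 4b_v$ is increasing in $b_v$ and decreasing in $a_v$, and minimizing at $b_v=\delta a_v$ and $a_v=\maxlamb{A}$ (feasible precisely when $\Delta\geq 0$) produces $\lambda_-(v)\geq 2\delta/(1+\sqrt{\Delta})$. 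The complementary root satisfies $\lambda_+(v)+\lambda_-(v)=a_v\geq\minlamb{A}$, furnishing the second candidate bound on the spectrum. Taking the minimum of the two bounds reproduces $\lambda_{tr}$ in the first branch of \eqref{ltr}; when $\Delta<0$ the $\delta$-branch is undefined and only the $\minlamb{A}$ bound survives, accounting for the second branch.

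Combining $\minlamb{\mathcal{A}}\geq\lambda_{tr}$ with the envelope estimate $\norm{z(t)}\leq c\exp(-\lambda_{tr}t)\norm{z(0)}$ and the $98\%$ criterion yields \eqref{rise}. The main obstacle I anticipate is making the minimization over $\lambda_-(v)$ rigorous, since the extremal pair $(a_v,b_v)$ need not be attained by an actual top-block eigenvector of $\mathcal{A}$; the argument must therefore run as an infimum over admissible Rayleigh quotients, and care is required at the boundary case $a_v^2=4b_v$ where the two roots coalesce. A secondary subtlety concerns the tightness of the $\minlamb{A}$ bound on $\lambda_+$: the sum-of-roots argument alone produces only $\minlamb{A}/2$ in the repeated-root configuration, so a finer inspection of how \eqref{p5} excludes the extremal case is needed to recover the stated $\minlamb{A}$.
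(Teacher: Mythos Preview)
Your rise-time step via the $98\%$ criterion is exactly what the paper does: it argues $2\%\approx e^{-4}$ and solves $\exp(-\lambda_{tr}\tilde t_{rt})=\exp(-4)$ to obtain \eqref{rise}. The difference lies in how the lower bound $\lambda_{tr}$ on the spectrum of $\mathcal{A}$ is obtained.

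The paper does not derive \eqref{ltr} at all; it simply states that \eqref{ltr} ``follows from direct implementation of Theorem 2.1 from \cite{shen2010eigenvalue}'', an off-the-shelf spectral bound for saddle point matrices of this form. Your proposal instead attempts to reprove that external result from the quadratic $\lambda^2-a_v\lambda+b_v=0$ and the Schur-complement identity $BA^{-1}B^\top=\phi_P\inv{M}_\star M_{d\star}\inv{D_{d\star}}M_{d\star}\inv{M}_\star\phi_P^\top$. That is a genuinely different, self-contained route; the benefit is that it explains where $\delta$ and $\Delta$ come from, whereas the paper's proof is essentially a citation. The cost is exactly the two technical issues you flag: the infimum over $(a_v,b_v)$ is taken on a set that is not obviously attained by an eigenvector of $\mathcal{A}$, and the sum-of-roots argument only gives $\lambda_+\geq a_v/2\geq\minlamb{A}/2$ rather than $\minlamb{A}$. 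Both obstacles are real and are precisely what the cited theorem of Shen handles with a more careful case analysis; if you want to keep your self-contained route you will have to replicate that analysis, otherwise the cleanest fix is to cite the same result.
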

     ~\\
	 \begin{proof}
	 	Expression \eqref{ltr} follows from direct implementation of Theorem 2.1 from \cite{shen2010eigenvalue}. Subsequently, note that the trajectories of the closed-loop reaching 98$\%$ of its steady-state value is equivalent to the decay function~~--~~that bounds the trajectories~~--~~at 2$\%$ of its initial value, i.e.,
	 	$$2\%\approx\exp\{-4\}.$$
	 	Thus, \eqref{rise} follows from 
	 	$$\exp\left\{-\lambda_{tr}\tilde{t}_{tr}\right\}=\exp\{-4\}.$$
	 \end{proof}

	\subsection{The general case:  $\mathcal{A}$ with non-symmetric block $(1,1)$} 
	In general, the addition of $J_2(q,p)$ introduces a non-symmetric component to the block $(1,1)$ of $\mathcal{A}$, which hinders the oscillatory behavior analysis of the transient response. Nevertheless, its addition may potentially have an interesting positive effect on the performance in terms of stabilization and oscillations. 
		
	Furthermore, we get interesting results if the gyroscopic terms are \textit{not intrinsic}~--~i.e.,
	$J_g(q,p)=0_{n\times n}$, see Proposition \ref{prop3}~--~as there is a transformation such that the new Hamiltonian is of the form of the kinetic plus the potential energy. Therefore, we can find a transformation of its linearized system such that it has a saddle point form whose block $(1,1)$ is symmetric. We summarize this result in the following proposition.

\begin{theorem}\label{prop10}
	There exists a transformation such that the linearization of \eqref{tg1}-\eqref{tg2} around the equilibrium $(q_\star,0_n)$ has a saddle point form whose block $(1,1)$ is symmetric if and only if the gyroscopic terms are \textit{non}-intrinsic.
\end{theorem}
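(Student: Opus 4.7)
The plan is to prove both directions by leveraging the canonical Hamiltonian representation of Proposition \ref{prop2} together with the characterization of intrinsic gyroscopic terms of Proposition \ref{prop3}. The overall idea is that the only obstruction to obtaining a symmetric $(1,1)$ block in the saddle‐point form \eqref{linearized} is the presence of a cross term between momenta and a position‐dependent vector in the Hamiltonian, and such a cross term can be eliminated by a canonical change of coordinates if and only if $J_g=0_{n\times n}$.

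For the \emph{if} direction, I would start by assuming the gyroscopic terms are non-intrinsic. By Definition \ref{def1} there exists a canonical transformation $(q,p_c)\mapsto(\tilde q_c,\tilde p_c)$ under which the closed‐loop Hamiltonian takes the purely kinetic‐plus‐potential form $\tilde H_c(\tilde q_c,\tilde p_c)=\tfrac{1}{2}\tilde p_c^{\top}\tilde M^{-1}(\tilde q_c)\tilde p_c+\tilde U(\tilde q_c)$. Since the transformation is canonical, the transformed dynamics is still of the port-Hamiltonian form \eqref{syscanonical}-\eqref{syscanonical2} with the standard interconnection matrix $J_c$ and a positive semi-definite $\tilde D_c$ in the place of $D_c$. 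At this point the situation is structurally identical to the one treated in Section \ref{trtuning1}: linearizing around $(\tilde q_{c\star},0_n)$, applying the Cholesky decompositions $\tilde M^{-1}(\tilde q_{c\star})=\tilde\phi_M^{\top}\tilde\phi_M$ and $\nabla_{\tilde q_c}^{2}\tilde U(\tilde q_{c\star})=\tilde\phi_P^{\top}\tilde\phi_P$, and performing the similarity transformation of the opening of Section \ref{trtuning}, yields a saddle-point matrix whose block $(1,1)$ reduces to $\tilde\phi_M\tilde D_{d\star}\tilde\phi_M^{\top}$, which is symmetric.

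For the \emph{only if} direction, I would argue by contrapositive: assume the gyroscopic terms are intrinsic, so by Proposition \ref{prop3} we have $J_g\neq 0_{n\times n}$, equivalently $\nabla_q Q_d\neq(\nabla_q Q_d)^{\top}$. By Definition \ref{def1}, no canonical transformation can eliminate the cross term $p_c^{\top}M_c^{-1}(q)Q_d(q)$ in \eqref{canham}. Linearizing around the equilibrium and carrying out the Cholesky-based similarity transformation, the contribution of $J_{2\star}$ (more precisely of its gyroscopic component $J_{g\star}$) to the $(1,1)$ block is precisely $-\phi_M J_{g\star}\phi_M^{\top}$, which is a nonzero skew-symmetric matrix. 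Since any admissible (canonical) change of coordinates leaves the linearization unitarily equivalent to this block structure, the skew-symmetric component persists and the $(1,1)$ block cannot be symmetric.

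The main obstacle I foresee is making the \emph{only if} direction rigorous: we must specify precisely which class of transformations is admissible and then show that the skew part of the $(1,1)$ block is an invariant of that class. One way to handle this cleanly is to restrict attention to canonical transformations preserving the pH/mechanical structure on which Definition \ref{def1} is stated, and then invoke Proposition \ref{prop3} as the key invariant: $J_g\neq 0_{n\times n}$ is preserved under the admissible transformations, and hence so is the skew contribution to the $(1,1)$ block of the linearized saddle-point matrix. A secondary subtlety is ensuring that the Cholesky factors used in the similarity transformation of Section \ref{trtuning} can be constructed in the transformed coordinates with the same guarantees of full rank and strictly positive diagonal, which follows from positive-definiteness of $\tilde M^{-1}(\tilde q_{c\star})$ and $\nabla_{\tilde q_c}^{2}\tilde U(\tilde q_{c\star})$ (the latter by \textbf{C1} of Assumption \ref{ass2}).
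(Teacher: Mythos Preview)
Your approach differs from the paper's in a meaningful way. The paper does not invoke Definition~\ref{def1} abstractly; instead it works directly in the canonical coordinates $(q,p_c)$ of Proposition~\ref{prop2}, linearizes \eqref{syscanonical}--\eqref{syscanonical2} with Hamiltonian \eqref{canham} around the equilibrium (so that the Hessian $\nabla^2 H_{c\star}$ carries the cross block $B_{12}=-(\nabla_q Q_d)_\star^{\top}M_{c\star}^{-1}$), and then applies an \emph{explicit} similarity transformation
\[
T_c=\begin{bmatrix}\phi_{Pc}&0_{n\times n}\\-\phi_{Mc}(\nabla_q Q_d)_\star&\phi_{Mc}\end{bmatrix},
\]
which is \emph{not} the block-antidiagonal $T$ from the opening of Section~\ref{trtuning} but one augmented by the lower-left $(\nabla_q Q_d)_\star$ block. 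A direct computation shows the $(1,1)$ block of the resulting saddle-point matrix $\mathcal{A}_c$ equals $\phi_{Mc}\bigl[D_{c\star}+(\nabla_q Q_d)_\star-(\nabla_q Q_d)_\star^{\top}\bigr]\phi_{Mc}^{\top}$, symmetric precisely when $\nabla_q Q_d=(\nabla_q Q_d)^{\top}$, i.e., when $J_g=0_{n\times n}$. This single construction thus yields both implications at once within the class of Cholesky-based similarity transformations.

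Your \emph{if} direction is correct in spirit but less direct: it routes through an unspecified nonlinear canonical transformation from Definition~\ref{def1} and then the standard block-antidiagonal $T$, which forces you to argue additionally that the block structure of the dissipation $R_c$ (and hence the saddle-point pattern) is preserved under that canonical map---a step the paper's purely linear construction avoids. The gap you flag in the \emph{only if} direction---that one must fix the admissible class of transformations and prove the skew part is an invariant---is genuine; the paper sidesteps it by effectively identifying ``a transformation'' with the explicit $T_c$ above, so in full generality neither argument closes that direction, but the paper's concrete $T_c$ makes the intended equivalence transparent.
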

\begin{proof}
	Consider the closed-loop system in the canonical Hamiltonian system form \eqref{syscanonical}-\eqref{syscanonical2} and define $J_2(q,p)$ as in \eqref{prop2eq}. Then, it follows that the linearization of the canonical Hamiltonian system around the equilibrium $(q_\star,0_n)$ corresponds to (recall that $\tilde{q}:=q-q_\star$ and $\tilde{p}:=p-p_\star$ with $p_\star=0_n$)
	\begin{equation*}
		\arraycolsep=2.6pt \def\arraystretch{1.6}
		\begin{array}{rcl}
			\begin{bmatrix}
				\dot{\tilde{q}}\\\dot{\tilde{p}}
			\end{bmatrix}&=&\begin{bmatrix}
				0_{n\times n}&I_n\\-I_n&-{D}_{c\star}
			\end{bmatrix}\nabla^2 H_{c\star}	\begin{bmatrix}
				\tilde{q}\\\tilde{p}
			\end{bmatrix},\\
		\nabla^2 H_{c\star}&=&\begin{bmatrix}
			\nabla_q U_{d\star}+B_{12}M_{c\star}B^\top_{12}&B_{12}\\B_{12}^\top&\inv{M}_{c\star}
		\end{bmatrix}
		\end{array}
	\end{equation*}
	with $B_{12}:=-(\nabla_q Q_{d})_\star^\top\inv{M}_{c\star}$
	Subsequently, let
	\begin{equation*}
		\inv{M}_{c\star}=\phi_{Mc}^\top\phi_{Mc}, \nabla_q^2U_{d\star}=\phi_{Pc}^\top\phi_{Pc}
	\end{equation*}
	where $\phi_{Mc},\phi_{Pc}\in\rea^{n\times n}$ are upper triangular matrices obtained from the Cholesky decomposition; and consider the similarity transformation matrix $T_c\in\rea^{2n\times 2n}$ and new coordinates $z_c\in\rea^{2n}$
	\begin{equation*}
		T_c:=\begin{bmatrix}
			\phi_{Pc}&0_{n\times n}\\-\phi_{Mc}(\nabla_qQ_d)_\star&\phi_{Mc}
		\end{bmatrix},~z_c=T_c\begin{bmatrix}
			\tilde{q}\\\tilde{p}
		\end{bmatrix}.
	\end{equation*}
	The linearized system in the coordinates $z_c$ becomes
	\begin{equation*}
		\begin{array}{rl}
			\dot{z}_c&=-\mathcal{A}_c z_c,\\
			\mathcal{A}_c&:=\\
			&\begin{bmatrix}
				\phi_{Mc}\left[{D}_{c\star} + (\nabla_q Q_d)_\star-(\nabla_q Q_d)_\star^\top\right]\phi_{Mc}^\top&\phi_{Mc}\phi_{Pc}^\top\\
				-\phi_{Pc}\phi_{Mc}^\top&0_{n\times n}
			\end{bmatrix}
		\end{array}
	\end{equation*}
	Thus, $\mathcal{A}_c$ is a saddle point matrix with symmetric block $(1,1)$ if and only if the gyroscopic terms are \textit{non-intrinsic}, i.e.,
	
	% with $J_g(q)=0_{n\times n}$, or equivalently,
	\begin{equation}\label{nablaq2}
		\nabla_q Q_d(q)=(\nabla_q Q_d(q))^\top.
	\end{equation}
\end{proof}

Therefore, Theorem \ref{prop10} suggests that when $J_2(q,p)$ is chosen as in \eqref{prop2eq} with \textit{non-intrinsic} gyroscopic terms (i.e., $Q_d(q)$ satisfying \eqref{nablaq2}); then, we can implement the tuning rules as described in Section \ref{trtuning1}.
% Moreover, concerning to the introduction of the gyroscopic related-forces into the closed-loop system, from the proof of Theorem \ref{prop10} we can conclude that the behavior of the closed-loop system in the vicinity of the equilibrium $(q_\star,0_n)$ is only affected by the intrinsic gyroscopic terms.

	\begin{remark}
		We can also prescribe the oscillation behavior to the transient response of the linearized system \eqref{linearized} by employing the well-known eigenvalue assignment (or pole placement) methodology, which is also known as the inverse problem for damped (gyroscopic) systems, i.e., given the complete set of eigenvalues and eigenvectors, find  $M_{d\star}$, $J_{d\star}$, $D_{d\star}$, and $\nabla_{q}^2 U_{d\star}$. However, we underscore that this tuning methodology lacks physical intuition.
		%We refer the reader to \cite{lancaster2005inverse} for systems with $J_{2}(q,p)=0_{n\times n}$ or systems with non-intrinsic gyroscopic terms. For systems with intrinsic gyroscopic terms, we refer the reader to \cite{qian2014quadratic}. 
\end{remark}
{
\begin{remark}\label{damping_rem}
The tuning rules described in this section require a proper characterization of $M(q)$, $U(q)$ and $D(q,p)$. Although the mass-inertia matrix and potential energy can be obtained relatively easily compared to $D(q,p)$, obtaining the natural damping can be challenging. Nonetheless, the tuning rules work even with a rough estimate as the closed-loop remains stable. However, we underscore that it may change the oscillatory behavior (the system may become overdamped or underdamped). The next section discusses some practical remarks on the damping treatment.
\end{remark} }

	%%%%%%%%%%%%%%%%%%%%%%%%%%%%%%%%%%%%%%%%%%%%%%%%%%%%%%%%%%%%%%%%%%%%%%%%%%%%%%%%%%%%%%%%%%%%%%%%%%%%
	\section{On damping treatment}\label{treatment}
	The implementation of the tuning rules described in Section \ref{trtuning} requires certain knowledge of the parameters of the open-loop system, i.e., the stiffness, mass inertia, and damping matrices. Albeit characterizing the former two parameters remain relatively easy, identifying the damping phenomenon is still an open question due to its complex nonlinear behavior.  Hence, to ensure the accuracy of our tuning methodology, we describe some methodologies found in the literature to identify the damping matrix in the current section.
	
	Some detailed assessments for general damping identification methods can be found in \cite{adhikari2001identification, prandina2009assessment}. Most of these damping identification techniques are based on the well-established modal analysis tool.  The tool characterizes the dynamics of a physical structure~--~with the help of the acquired data~--~in terms of the modal parameters such as natural frequency, damping factor, eigenvalues, and eigenvectors (or mode shape). The main disadvantage of the modal analysis tool is that it requires different equipment types, e.g., tens of sensors, impact hammer, and data acquisition hardware.  On the other hand, in \cite{liang2007damping} we find a simple damping identification method based on the energy of the system that simplifies the data recollection process. Such a methodology requires only one set of measurements~--~position, velocity, and acceleration data~--~but a larger set improves the identification accuracy. However, this damping identification methodology is restricted to constant and diagonal mass inertia and stiffness matrices. An extension of such results to a larger class of mechanical systems is the energy-based damping identification (EBDI) approach which can be found in \cite{chan2021passivity}.
	
	Nevertheless, the mentioned references only characterize linear damping (or viscous damping)  matrices. Therefore, these identifications are only valid in a region near the equilibrium point. Thus, the accuracy of the tuning rules may decrease when the trajectories start far from the equilibrium point. Also, note that the damping inaccuracy characterization can be included as part of $\hat{d}(t,x)$ in \eqref{transformed2}. Then it follows that due to this disturbance, the closed-loop system may not converge to the equilibrium point. To overcome this issue, the author in \cite{chen2004disturbance} proposes a nonlinear disturbance observer-based control (NDOBC). This dynamic extension approach is twofold: i) estimating the disturbance and ii) compensating the estimated disturbance using proper feedback. In \cite{sandoval2011interconnection, fu2018nonlinear}, the authors extend such an approach to port-Hamiltonian systems.
	
	Thus, we can ensure the accuracy of the tuning rules described in Section \ref{trtuning} by selecting a proper damping identification methodology (the selection process may depend on the availability of the equipment). Moreover, if the damping matrix is highly nonlinear, we can implement the chosen identification methodology combined with the NDOBC approach.
	
	%%%%%%%%%%%%%%%%%%%%%%%%%%%%%%%%%%%%%%%%%%%%%%%%%%%%%%%%%%%%%%%%%%%%%%%%%%%%%%%%%%%%%%%%%%%%%%%%%%%%%	
	\section{Case studies}\label{case}
	{In this section, we illustrate the applicability of some of our tuning rules for fully-actuated and underactuated mechanical systems. For the former, we employ the results from Sections  \ref{nonperturbed} and \ref{trtuning1}.} For the latter,  we employ the tuning rules discussed in Sections \ref{perturbed} and \ref{trtuning}. For both configurations, we employ a PBC approach with the form, \footnote{This controller structure can be found in PBC approaches such as in \cite{wesselink} or \cite{gomez2004physical} we refer the reader to these results for further details.}
	\begin{equation}\label{control}
		u=-K_{\tt{es}}G^\top( q-q_\star)-K_{\tt{di}}G^\top \dot{q}-K_{\tt{int}}G^\perp \dot{q}+\tilde{\kappa}(q)
	\end{equation}
	where $K_{\tt{es}}, K_{\tt{di}}$ are positive definite matrices with appropriate dimensions; $K_{\tt{int}}$ is a matrix with appropriate dimensions that modifies the interconnection of the system; and $\kappa:\rea^n\to \rea^m$ is a vector to be defined later. Moreover, we employ robot manipulators that verify \eqref{boundM}.  
	
	\subsection{Fully-actuated Mechanical System: A 5-DoF Robotic Arm}
	To demonstrate the effectiveness of the rules proposed in Sections \ref{nonperturbed} and \ref{trtuning1}, we use the Philips Experimental Robotic Arm (PERA) as shown in Fig. \ref{pera} (see \cite{rijs2010philips}). 
	\begin{figure}[t]
		\centering
		\includegraphics[width=0.3\textwidth]{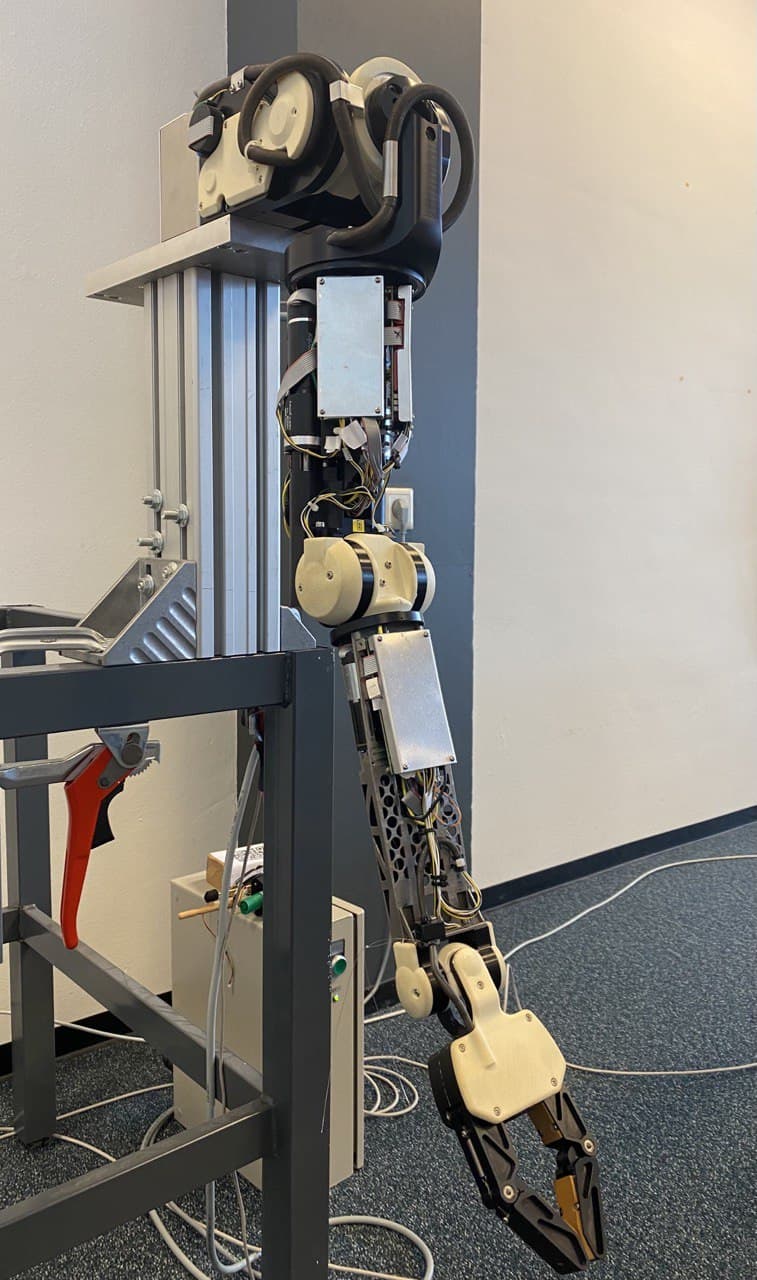}
		\caption{Experimental setup: Philips Experimental Robotic Arm (PERA)}\label{pera}
	\end{figure}
	For our experiments, we select the 5-DoF configuration with the following joints
	\begin{itemize}
		\item Shoulder yaw with angle $q_1$.
		\item Shoulder pitch with angle $q_2$.
		\item Shoulder roll with angle $q_3$.
		\item Elbow pitch with angle $q_4$.
		\item Elbow roll with angle $q_5$.
	\end{itemize} 
	The model is given by \eqref{sysmec} with $n=m=5$, $D(q,p)=0_{5\times 5}$, and $G=I_5$. Furthermore, the expressions for $M(q)$ and $U(q)$ are omitted due to space constraint; we refer the reader to \cite{ariasenergy,bol2012force} for further details. Additionally, a MATLAB\textsuperscript{\textregistered} script to generate the latter expressions can be found in \cite{bol2}. 
	We stabilize the PERA at the desired configuration ${q_\star=col(0.5,0.6,-1.6,1.3,0.5)~rad}$ with the controller \eqref{control} with $K_{\tt{int}}=0_{5\times 5}$ and $\tilde{\kappa}(q)=\nabla_q U(q)$. Thus, the targets dynamics are described as in \eqref{tg1}-\eqref{tg2} with  
	\begin{equation}\label{idapbcgains}
		\arraycolsep=1pt \def\arraystretch{1.6}
		\begin{array}{rcl}
			J_2(q,p)&=&0_{5\times 5}\\
			D_d(q,p)&=&D(q,p)+GK_{\tt{di}}G^\top\\
			M_d(q)&=&M(q),\\
			U_d(q)&=&\frac{1}{2}(q-q_\star)^\top GK_{\tt{es}}G^\top(q-q_\star). \\   
		\end{array}
	\end{equation}
	{We consider a rough estimate of the damping for this experimental setup. As mentioned in Remark \ref{damping_rem}, the tuning rules require a proper characterization of the parameters of the mechanical system and obtaining them in practice may be challenging. However, we demonstrate that the guidelines work even with an approximation. The gains selection for this case study are shown in Table \ref{gains}. \footnote{For simplicity, we have selected $K_{\tt{es}}$ and $K_{\tt{di}}$ as diagonal matrices.} 	
	\begin{table}[t]
		\def\arraystretch{1.6}
		\caption{PERA controller gains}\label{gains}
		\centering
		\begin{tabular}{lll}
			\hline
			& \multicolumn{1}{c}{$K_{\tt di}$} & \multicolumn{1}{c}{$K_{\tt es}$}     \\ \hline
			Case A & diag\{1,1,1,1,1\}       & diag\{200,175,200,175,200\} \\
			Case B & diag\{20,30,30,30,30\}  & diag\{200,175,200,175,200\} \\
			Case C & diag\{20,30,30,30,30\}  & diag\{100,100,100,100,100\} \\ \hline
		\end{tabular}
	\end{table}	
		
	First, the response for Case A corresponds to an arbitrary tuning for comparison purposes, where we can see the oscillations (and overshoot) for every joint in Fig.~\ref{exp_data}.	
	\begin{figure}[t]
		\centering
		\includegraphics[width=0.9\columnwidth]{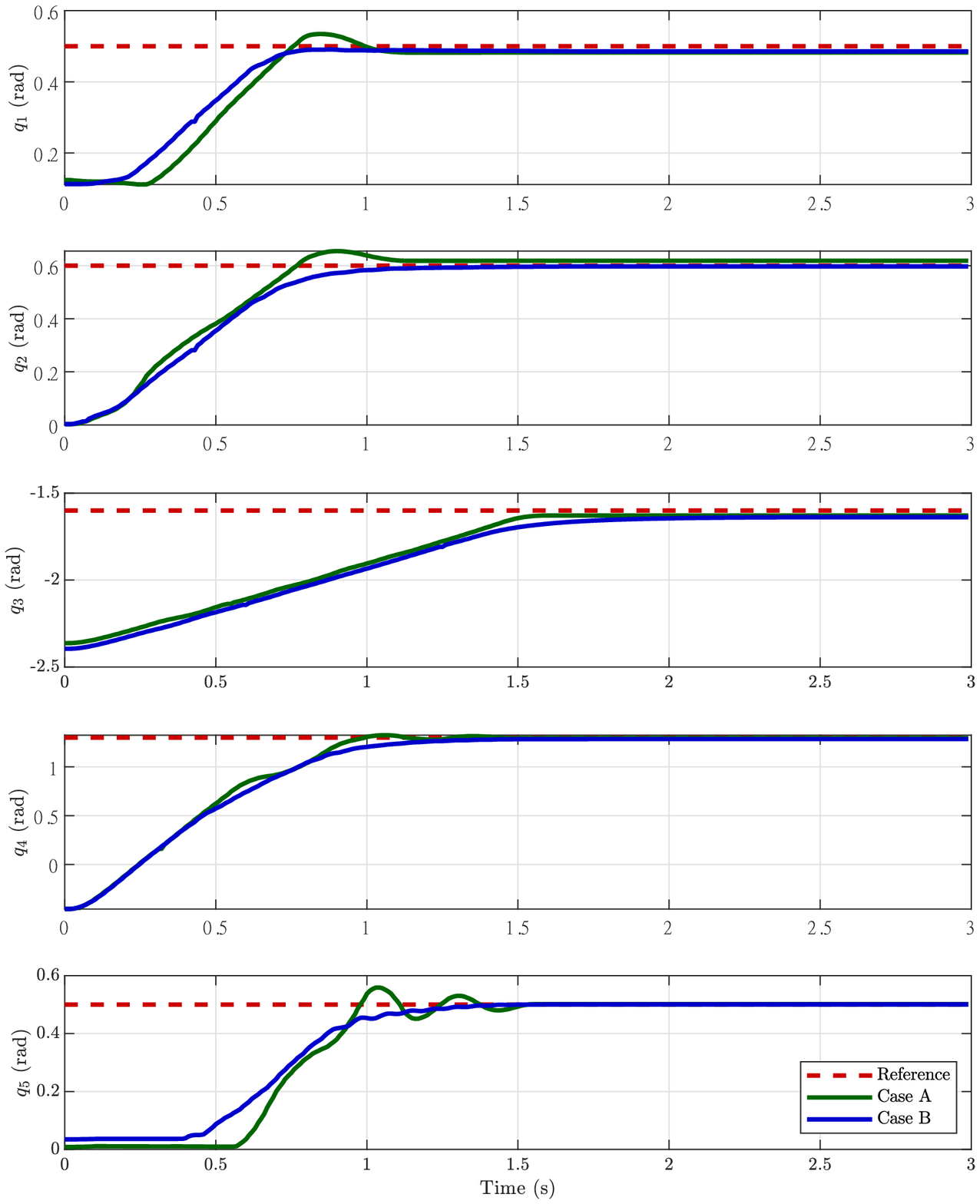}
		\caption{Trajectories for angular position of the shoulder joints for Case A and B.}\label{exp_data}
	\end{figure}	
	Then, to remove the oscillations from the transient response, we employ \eqref{p5} from Theorem \ref{prop8} to select the parameters of the control law \eqref{control}. Note that from \eqref{idapbcgains}, we get that  $M_{d\star}=M_\star$, $D_{d\star}=K_{\tt{di}}$, and} {${\nabla_q^2U_{d\star}=K_{\tt{es}}}$. Subsequently, to calculate $\minlamb{K_{\tt{di}}}$, we} {fix $K_{\tt{es}}=diag\{200,175,200,175,200\}$.\footnote{We have chosen this $K_{\tt{es}}$ value since we obtain an acceptable response for any $\minlamb{K_{\tt{di}}}$.} Therefore, we get that 
	\begin{equation*}
		\minlamb{K_{\tt{di}}}=19.8731.
	\end{equation*}
	Next, for Case B, we have selected $K_{\tt di}$ according to the calculated above. Note that, in Fig.~\ref{exp_data}, the oscillations (and overshoot) from Case B are attenuated in comparison with Case A. Moreover, in Case B, the responses are slightly overdamped,} {suggesting that the calculated $K_{\tt{di}}$ is conservative and the natural damping is different than zero, i.e., $D(q)\succ 0$ (see Remark \ref{damping_rem}). Nonetheless, even with a rough estimate, this tuning approach ensures that the oscillations in the closed loop are removed.

	Then, to illustrate the applicability of \eqref{rate} as a tuning rule, we use the Case B as the new reference baseline and we decrease the term $\beta_{\max}$ for Case C (recall from \eqref{betas} that $\beta_{\max}:=\nabla_{q}^2U_{d\star}$). Since $\beta_{\max}$ is proportional to the upper bound of the rate of convergence, it is expected that the rate of convergence reduces in Case C with respect to Case B, which is verified in Fig.~\ref{exp_data_pera2}.
	\begin{figure}[t]
		\centering
		\includegraphics[width=0.9\columnwidth]{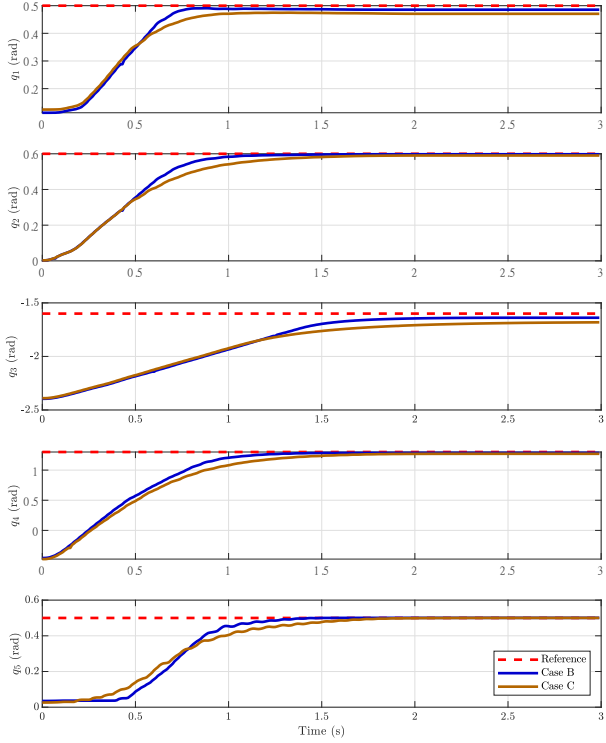}
		\caption{Trajectories for angular position of the shoulder joints for Case B and C.}\label{exp_data_pera2}
	\end{figure}	

	There is a small steady-state error in the joint positions, which may be due to non-modeled physical phenomena such as measurement noise, dry friction, or asymmetry of the motors. We underscore that the implemented PBC scheme is applied to the passive output signal, which corresponds to the actuated velocities for mechanical systems. Therefore, the integral action is applied on the actuated velocities and this scheme does not compensate for the error seen in the trajectories of the positions. For further details on compensating the position error, we refer the reader to \cite{dirksz2012power,romero2013robust,chan2022integral}.}

\begin{remark}
	We omit the use of Theorem \ref{thm_risetime}, i.e., the rise time tuning rule, as the results for this particular study case are highly conservative. However, it may be relevant for another set of mechanical system parameters, see the example in \cite{chan2020tuning}.
\end{remark}

\subsection{Underactuated Mechanical System: A 2-DoF Planar Manipulator with flexible joints}

\color{black}
This subsection illustrates the guidelines described in Section \ref{perturbed} and \ref{trtuning}. To this end, we employ a 2 DoF planar manipulator with flexible joints as shown in Fig. \ref{man} (see \cite{quanser} for the reference manual).
The manipulator in closed-loop with \eqref{control} is described in \eqref{disturbed} with a disturbance given by $d=col(0,0,0.5,0.5)~Nm$, \footnote{The open-loop model is described as in \eqref{sysmec} with $n=4,m=2$, which corresponds to an underactuated configuration since $m<n$.} and we have that
\begin{itemize}
\item $q_1$: Position of the link 1.
\item $q_2$: Position of the link 2.
\item $q_3$: Position of the motor of link 1.
\item $q_4$: Position of the motor of link 2.
\end{itemize}

Note that $col(q_1,q_2)$ corresponds to the unactuated coordinates. The rest of the parameters~--~obtained from \cite{chan2021passivity}~--~are given as
\begin{equation*}
	\def\arraystretch{1.6}
	\begin{array}{rl}
		G&=\begin{bmatrix}
			0_{2\times2}\\ G_1
		\end{bmatrix},~G_1=diag\{1,1.67\},\\ 
		U(q)=&\frac{1}{2}\norm{col(q_1,q_2)-col(q_3,q_4)}_{K_s}^2,\\
		K_s&=diag\{8.43, 16.86\},\\
		D&= diag\{D_u,D_a\},\\
		D_u&=diag\{0.0331, 0.0077\},\\
		D_a&=diag\{2.9758, 2.8064\},\\
		M(q)&=\begin{bmatrix}M_l(q_2)& 0_{2\times 2}\\0_{2\times 2}& M_m\end{bmatrix},\\
		M_m&=diag\{0.0628, 0.0026\},\\
		M_l(q_2)&:=\begin{bmatrix}
			a_1+a_2+2b\cos(q_2)&a_2+b\cos(q_2)\\
			a_2+b\cos(q_2)&a_2
		\end{bmatrix},		
	\end{array}
\end{equation*} 
where $a_1=0.1547,~a_2=0.0111$, $b=0.0168$.

\begin{figure}[t]
	\centering
	\includegraphics[width=0.9\columnwidth]{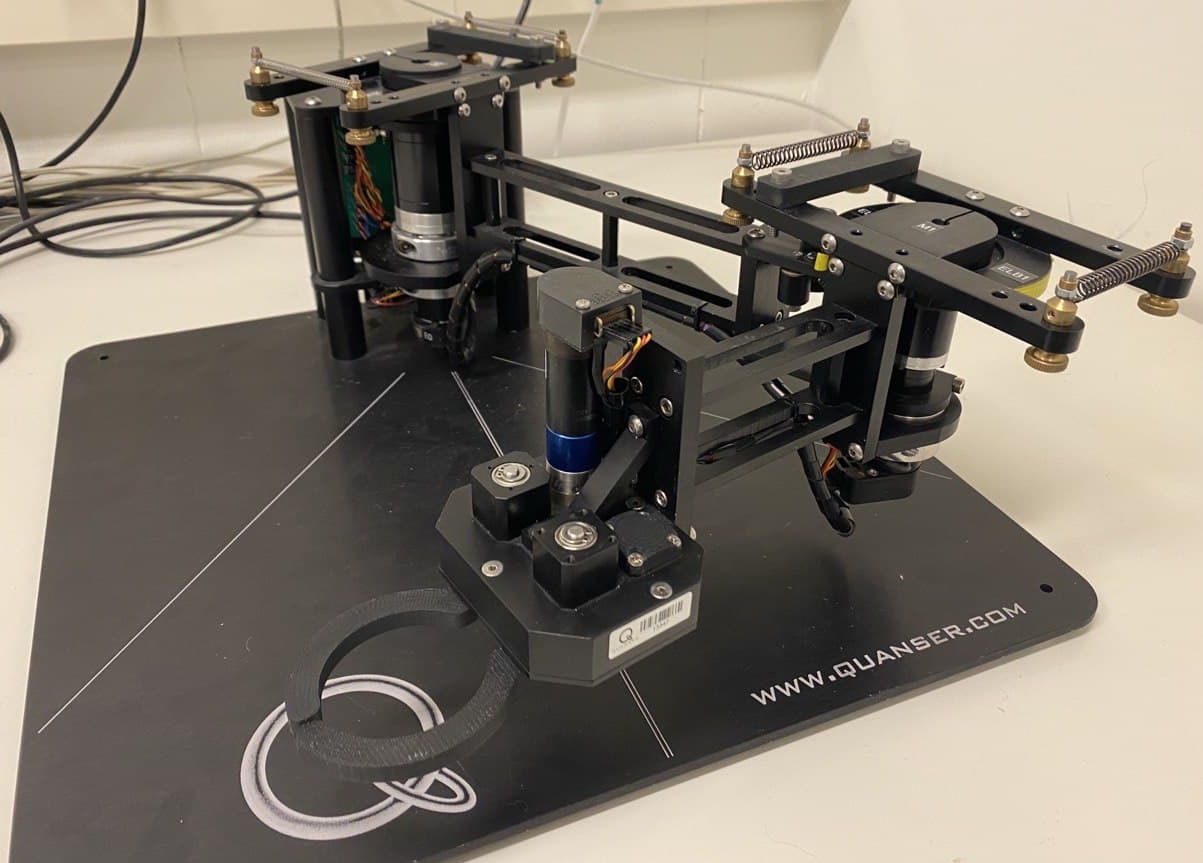}
	\caption{2-DoF Planar Manipulator}\label{man}
\end{figure}	
The manipulator is stabilized at the desired configuration $q_\star=col(0.6,0.8,0.6,0.8)~rad$ by employing the controller \eqref{control} with $\tilde{\kappa}(q)=0_2$.  The corresponding closed loop are described as in \eqref{tg1}-\eqref{tg2} with 
\begin{equation*}
	\arraycolsep=10pt \def\arraystretch{1.6}
	\begin{array}{rcl}
		J_2&=&\begin{bmatrix}
			0_{2\times 2} &\frac{1}{2}(G_1K_{\tt{int}})^\top\\
			-\frac{1}{2}G_1K_{\tt{int}}&0_{2\times 2}
		\end{bmatrix}\\
		D_d&=&\begin{bmatrix}
			D_u &\frac{1}{2}(G_1K_{\tt{int}})^\top\\
			\frac{1}{2}G_1K_{\tt{int}}&D_a+G_1K_{\tt{di}}G_1^\top
		\end{bmatrix}\\
		M_d(q)&=&M(q),\\
		U_d(q)&=&\frac{1}{2}(q-q_\star)^\top GK_{\tt{es}}G^\top(q-q_\star)+U(q). 
	\end{array}
\end{equation*}
\begin{table}[t]
	\centering
	\caption{Planar manipulator controller gains}\label{mangain}
	\begin{tabular}{cccc}
		\hline
		& $K_{\tt di}$             &  $K_{\tt es}$           & $K_{\tt int}$             \\ \hline
		Case D & diag\{1.5,1.5\} & diag\{3.5\}   & diag\{0,0\}      \\
		Case E & diag\{1.5,1.5\} & diag\{12,15\} & diag\{0,0\}      \\
		Case F & diag\{7,5\}     & diag\{12,15\} & diag\{0,0\}      \\
		Case G & diag\{7,5\}     & diag\{12,15\} & diag\{1.1,0.43\} \\ \hline
	\end{tabular}
\end{table}
{Now, we proceed to improve the performance of the closed loop. First, we tune the nonlinear stability margin \eqref{sm} and the maximum overshoot \eqref{ov} by modifying $\beta_{\max}$.}
\newpage
{Recall that $\beta_{\max}$ is given by $K_{\tt es},$ which shapes the potential energy of the manipulator, see \eqref{betas}. Note that $\beta_{\max}$ (or $K_{\tt es}$) is proportional to the gain margin and the maximum overshoot; therefore, the gain margin is expected to increase at the expense of a higher overshoot when $\beta_{\max}$ is augmented. To highlight the mentioned, we have selected two sets of gains as shown in Case D and Case E from Table \ref{mangain}. We choose the Case D as the baseline case, then, we increment $\beta_{\max}$ in Case E by augmenting $K_{\tt es}$. The responses in Fig.~\ref{quanser1} verify that we obtain a better attenuation of the disturbance~--~i.e., the steady state error improves~--~at the expense of an increased overshoot.
\begin{figure}[t]
	\centering
	\includegraphics[width=\columnwidth]{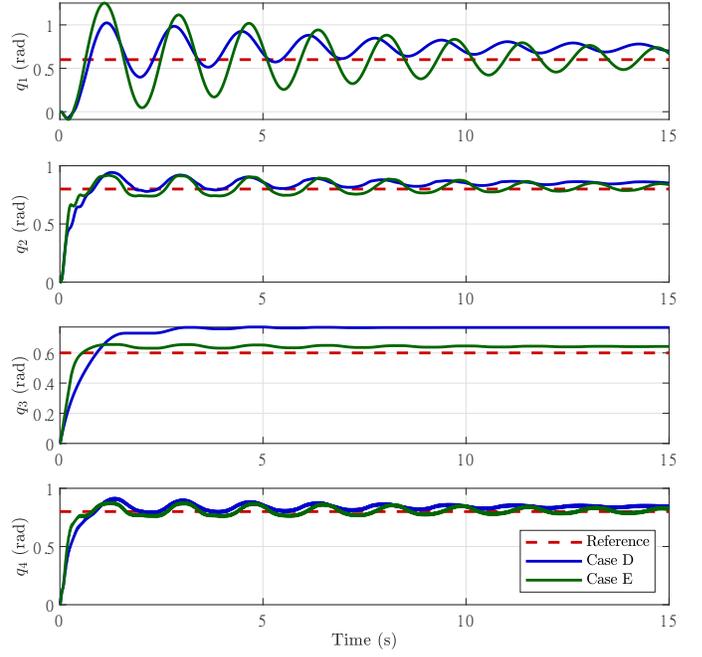}
	\caption{Trajectories for angular position for Case D and E.}\label{quanser1}
\end{figure} 
\begin{figure}[t]
	\centering
	\includegraphics[width=\columnwidth]{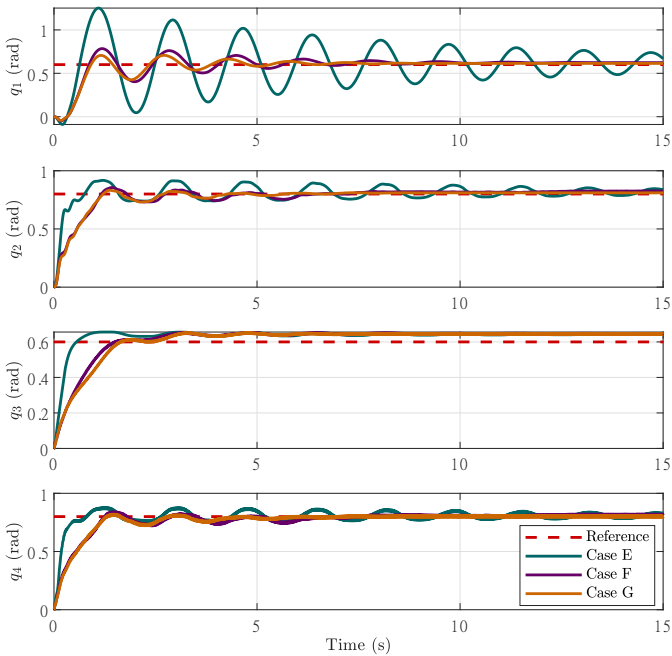}
	\caption{Trajectories for angular position for Case E, F, and G.}\label{quanser2}
\end{figure}

Next, we proceed to reduce the overshoot. Towards this end, we select the Case E as the new baseline and modify $K_{\tt di}$ and $K_{\tt int}$} {in  Case F and Case G.  The gains and responses for these cases are shown in Table \ref{mangain} and Fig.~\ref{quanser2}, respectively.
	
The gains selection for Case F is based on the circle-like theorem described in Theorem \ref{prop7}, where we use Fig.~\ref{ceig}.a as a }{ quick visual aid. Note that, by augmenting $K_{\tt di}$,  some of the circles shift to the left, leading to a faster convergence rate and, consequently, fewer oscillations. The latter is verified in Fig.~\ref{quanser2}, where the oscillations are reduced substantially in Case F with respect to Case E.  

Then, to improve further the performance, we inject gyroscopic forces via $J_2(q,p)$ in Case~G, which employ again Theorem \ref{prop7} to tune $K_{\tt{int}}$. Consider Fig. \ref{ceig}.b and note that the} {inclusion of $J_2(q,p)$ (or equivalently, $K_{\tt{int}}$) shifts some of the circles in the imaginary axis; therefore, the eigenvalue may be located anywhere in the red arc resulting in less damping ratio, and consequently, the response may exhibit fewer oscillations. The oscillations are reduced slightly in Case G with respect to Case F, as verified in Fig.~\ref{quanser2} and Table \ref{energy}, where it contains the values of the square of the $\mathcal{L}_2$-norm. \footnote{The square of the $\mathcal{L}_2$-norm corresponds to the energy contained in the such signal.}  The improvement in oscillations when $K_{\tt int}$ is introduced is slight for this particular case study since this gain is small. This behavior stems from a limitation with the controller, that is, $K_{\tt int}$ is selected such that $D_d\succ 0$. To verify this condition, we employ the Schur complement analysis, i.e., $K_{\tt int}$ is chosen such that $$D_a+G_1K_{\tt{di}}G_1^\top-\frac{1}{2}G_1K_{\tt{int}}\inv{D_u}\frac{1}{2}G_1K_{\tt{int}}^\top\succ0.$$
Therefore, the selection of $K_{\tt int}$ cannot be incremented further as it is restricted to the previous condition. Nonetheless, the improvement is still evident. }

\color{black}
 A video of the experimental results can be found in \url{https://youtu.be/yUGs44K77wE}.
\begin{table}[t]
	\centering
	\caption{Energy contained in each signal}\label{energy}
	\begin{tabular}{ccc}
		\hline
		& Case F & Case G \\ \hline
		$q_1$ & 2776   & 2666   \\
		$q_2$ & 4663   & 4614   \\
		$q_3$ & 2915   & 2878   \\
		$q_4$ & 4544   & 4509   \\ \hline
	\end{tabular}
\end{table}

%%%%%%%%%%%%%%%%%%%%%%%%%%%%%%%%%%%%%%%%%%%%%%%%%%%%%%%%%%%%%%%%%%%%%%%%%%%%%%%%%%%%%%%%%%%%%%%%%%%%
\section{Concluding remarks and future work}\label{conclusion}
We have provided a broad guide for tuning the control parameters of a class of PBC methodologies that preserve the mechanical structure and prescribe the desired behavior in terms of several attributes: upper bound of the rate of convergence, maximum permissible overshoot, gain margin and oscillations in the transient response. Moreover, we have associated the PBC parameters with the physical quantities of the closed-loop system (energy or damping). Therefore, we have endowed the parameter selection process of PBC approaches with more intuition.
	
Furthermore, we have shown how a class of gyroscopic forces affects the behavior of the closed loop near the equilibrium.
	
Additionally, we have successfully implemented our tuning rules on two mechanical systems setups: i) the PERA system (fully-actuated configuration using 5 DoF), and ii) a planar manipulator with flexible joints (underactuated configuration). In both cases, we have reduced the oscillations of the transient response. 
	
Regarding future work, we aim to find tuning rules to prescribe the behavior in the vicinity of the equilibrium of the closed-loop system when intrinsic gyroscopic forces are introduced. {Moreover, we aim to find methodologies to calculate the parameters $\mu$ and $\epsilon$}.
\bibliographystyle{ieeetr}
\bibliography{ref} 

\end{document}